\crefname{algorithm}{procedure}{procedures}
\Crefname{algorithm}{Procedure}{Procedures}
\crefname{figure}{figure}{figures}
\Crefname{figure}{Figure}{Figures}
\newcommand{\vecfont}{\mathbf}
\newcommand{\probfont}{\text}
\newcommand{\qgatefont}{\mathrm}
\newcommand{\algofont}{\mathsf}
\newcommand{\Am}{\vecfont{A}}
\newcommand{\Hm}{\vecfont{H}}
\newcommand{\xv}{\vecfont{x}}
\newcommand{\yv}{\vecfont{y}}
\newcommand{\sv}{\vecfont{s}}
\newcommand{\av}{\vecfont{a}}
\newcommand{\vv}{\vecfont{v}}
\newcommand{\ev}{\vecfont{e}}
\newcommand{\uv}{\vecfont{u}}
\newcommand{\tv}{\vecfont{t}}
\newcommand{\zv}{\vecfont{z}}
\newcommand{\rv}{\vecfont{r}}
\newcommand{\Zv}{\vecfont{Z}}
\newcommand{\Bm}{\vecfont{B}}
\newcommand{\Zm}{\vecfont{Z}}
\newcommand{\zerov}{\vecfont{0}}
\newcommand{\hf}{\widehat{f}}
\newcommand{\tAm}{\widetilde{\Am}}
\newcommand{\ket}[1]{|#1\rangle}
\newcommand{\bra}[1]{\langle#1|}
\newcommand{\ketbra}[2]{|#1\rangle\langle#2|}
\newcommand{\braket}[2]{\langle #1 | #2 \rangle}
\newcommand{\altketbra}[1]{\ketbra{#1}{#1}}
\newcommand{\kb}[1]{\altketbra{#1}}
\newcommand{\QFTw}{\qgatefont{QFT}}
\newcommand{\QFT}{\QFTw}
\newcommand{\Ugate}{\mathsf{U}}
\renewcommand{\norm}[1]{\ensuremath{\lvert  \kern-1pt\lvert #1 \rvert \kern-1pt \rvert}}
\newcommand{\F}{\mathbb{F}}
\newcommand{\Z}{\mathbb{Z}}
\newcommand{\zo}{\{0,1\}}
\newcommand{\setmid}{\ :\ }
\newcommand{\Iint}[2]{\llbracket #1 , #2 \rrbracket}
\newcommand{\SIS}{\ensuremath{{\probfont{SIS}}}}
\newcommand{\LWE}{\ensuremath{{\probfont{LWE}}}}
\newcommand{\SLWE}{\ensuremath{S\ket{\LWE}}}
\newcommand{\CLWE}{\ensuremath{C\ket{\LWE}}}
\newcommand{\ICLWE}{\ensuremath{IC\ket{\LWE}}}
\newcommand{\ISIS}{\ensuremath{\probfont{ISIS}}}
\newcommand{\ICC}{\ISIS}
\newcommand{\EDCP}{\ensuremath{{\probfont{EDCP}}}}
\newcommand{\EDCPb}{\ensuremath{{\overline{\EDCP}}}}
\renewcommand{\aa}{\algofont{A}}
\newcommand{\rdtape}{\mathcal{R}}
\newcommand{\eqdef}{\triangleq}
\renewcommand{\poly}{\pcpolynomialstyle{poly}}
\renewcommand{\negl}{\pcpolynomialstyle{negl}}
\newcommand{\Time}{\mathrm{Time}}
\renewcommand{\sample}{\gets}
\newcommand{\Unif}{\sample}
\newcommand{\E}{\mathbb{E}}
\newcommand{\ie}{\textit{i.e.}\xspace}
\newcommand{\andre}[1]{\textcolor{blue}{[A : #1]}}
\newcommand{\paul}[1]{\textcolor{red}{[P : #1]}}
\newcommand{\isanonymous}{0}
\newcommand{\COMMENT}[1]{}
\newcommand{\eps}{\varepsilon}
\newcommand{\one}{\mathbbm{1}}
\newcommand{\T}{^{\mathsf{T}}}
\newcommand{\horzbar}{\rule[.5ex]{2.5ex}{0.5pt}}
\newcommand{\wpsi}{\widetilde{\psi}}
\newcommand{\hzsy}{z_{\sv,\xv}}
\newcommand{\ohzsy}{\overline{z_{\sv,\xv}}}
\newcommand{\C}{\Lambda}
\newcommand{\Pdec}{p}
\newcommand{\wphi}{{W'}}
\newcommand{\rdext}{\algofont{E}}
\newcommand{\xf}{\overline{\xv}}
\newcommand{\rdtapef}{\overline{\rdtape}}
\newcommand{\isissolver}{\algofont{ISIS\text{-}Solver}}
\newcommand{\matext}{\algofont{MatExtend}}
\title{On the Quantum Equivalence between $\SLWE$ and $\ISIS$}
	\author{André Chailloux \and Paul Hermouet}
	\institute{Inria de Paris, COSMIQ team\\\email{andre.chailloux@inria.fr}\hfil\email{paul.hermouet@inria.fr}}
	\author{}
	\institute{}
\begin{document}
\maketitle

\begin{abstract}
	\normalsize
	Chen, Liu, and Zhandry~\cite{CLZ22} introduced the problems $\SLWE$ and $\CLWE$ as quantum analogues of the Learning with Errors problem, designed to construct quantum algorithms for the Inhomogeneous Short Integer Solution (ISIS) problem. 
	Several later works have used this framework for constructing new quantum algorithms in specific cases. However, the general relation between all these problems is still unknown.

	In this paper, we investigate the equivalence between $\SLWE$ and $\ISIS$. We present the first fully generic reduction from ISIS to $\SLWE$, valid even in the presence of errors in the underlying algorithms. We then explore the reverse direction, introducing an inhomogeneous variant of $\CLWE$, denoted $\ICLWE$, and show that $\ICLWE$ reduces to $\SLWE$. Finally, we prove that, under certain recoverability conditions, an algorithm for ISIS can be transformed into one for $\SLWE$. We instantiate this reverse reduction by tweaking a known algorithm for (I)$\SIS_\infty$ in order to construct a quantum algorithm for $\SLWE$ when the alphabet size $q$ is a small power of $2$, recovering some results of Bai et al.~\cite{BJK+25}. Our results thus clarify the landscape of reductions between $\SLWE$ and $\ISIS$, and we show both their strong connection and the remaining barriers for showing full equivalence.

\end{abstract} 
\newcommand{\SEDCP}{\overline{\text{EDCP}}}

\section{Introduction}
\label{sec:intro}
\subsection{Context}
A cornerstone of lattice-based cryptography is Regev's reduction~\cite{Reg05}, which is a quantum reduction between some lattice-based problems related to the \emph{Short Integer Solution} (SIS) problem and the \emph{Learning With Errors} (LWE) problem. This is a quantum reduction that uses a (classical or quantum) algorithm for LWE in order to create a superposition of noisy lattice points and then measuring in the Fourier basis to obtain a short dual lattice point. 

As noted in~\cite{SSTX09}, we actually need to solve an easier problem than $\LWE$, since the error can be in quantum superposition. This creates strong links between this problem and the \emph{Dihedral Coset Problem}. This was actually first explicitly used by Brakerski, Kirshanova, Stehlé and Wen~\cite{BKSW18} where they extend this reduction and introduce the Extrapolated Dihedral Coset Problem.

A few years later, Chen Liu and Zhandry~\cite{CLZ22} revisited this reduction for algorithmic purposes. They show that in some regimes, the $\LWE$ problem with errors in quantum superposition, which they call the $\SLWE$ problem, can be significantly easier than the standard $\LWE$ problem. They show how to construct polynomial time quantum algorithm for the $\SIS_\infty$ for some contrived parameters. At that time, we didn't have corresponding classical algorithms but later works showed efficient classical algorithms for this problem~\cite{II24,KOW25}.

 Using this type of algorithms, Yamakawa and Zhandry~\cite{YZ24} provided a first quantum advantage without structure in the Random Oracle Model. They showed that this approach cannot in all generality be ``dequantized" as was the previously mentioned algorithm for $\SIS_\infty$. This reduction has also been extended to the setting of linear codes~\cite{DRT23,CT24}, as well as structured codes in order to obtain a quantum advantage~\cite{JSW+24,CT25}.

All of these results use an algorithm for $\SLWE$ to construct an algorithm for $\SIS$ or $\ISIS$ and perform an ad hoc analysis of this reduction.  Note that we cannot have a generic reduction from $\SIS$ to $\SLWE$ using this approach (see~\cite{CT24}). Chailloux and Tillich~\cite{CT25} provided the first reduction, from $\ISIS$ to $\SLWE$. This reduction however requires some assumptions on the $\SLWE$ algorithm which are satisfied by classical algorithms but not necessarily by quantum algorithms. A first natural question arises.

\begin{question} Is it possible to have a fully general reduction from $\ISIS$ to $\SLWE$ that is robust to errors in the $\SLWE$ algorithm?
\end{question}
Also, because of the importance of $\SLWE$, recent works directly construct quantum algorithms for $\SLWE$. First, a generic quantum algorithm for $\SLWE$ was presented in ~\cite{CHL+25}, running in subexponential time and requiring a subexponential number of queries. Then, the authors of~\cite{BJK+25} presented a slightly superpolynomial algorithm for $\SLWE$ in the case where the alphabet size $q$ is a small power of $2$. These results both use variants of the quantum Kuperberg sieve~\cite{Kup13} for the Dihedral Coset Problem. 
When looking at these algorithms more carefully, one can notice that they are actually very similar to known classical algorithm for $\ISIS$ and this raises the following natural question
\begin{question} Is there a way to construct algorithms for $\SLWE$ from algorithms for $\ISIS$? More generally, are the problems $\SLWE$ and $\ISIS$ equivalent?
\end{question}

\noindent \textbf{Contributions in short.} Our work has the following contributions:
\begin{enumerate}
	\item We provide the first generic reduction from $\ISIS$ to $\SLWE$ without any condition on the $\SLWE$ algorithm, even when it has errors. This improves and generalizes the work of~\cite{CT25}. This result is obtained by a slight change in the construction of the $\ISIS$ algorithm, as well as a much sharper analysis of the errors in the $\SLWE$ algorithm.
	\item For the reverse reduction, we show that, given an $\ISIS$ algorithm satisfying certain conditions, we can construct an algorithm for the corresponding $\SLWE$ problem, giving a conditional reduction. We require that the output of the $\ISIS$ algorithm is uniform over the solutions and that one can recover the internal randomness used from a solution. This reverse reduction is done by introducing an intermediate problem called $\ICLWE$, which is the inhomogeneous of the $\CLWE$ problem introduced in~\cite{CLZ22}.
	\item The condition on the $\ISIS$ algorithm seems quite stringent at first sight and a natural question to ask is whether it can be useful at all. We answer positively, giving examples of $\ISIS$ algorithms that satisfy this property. Moreover, we show that we can recover some existing algorithms from~\cite{BJK+25} for $\SLWE$, using our algorithm for $\ISIS$ and our reverse reduction. 
\end{enumerate}

Our results clarify the relation between $\SLWE$ and $\ISIS$, showing both their strong connection and the remaining barriers to proving full equivalence. We now provide a more formal presentation of our contributions.

\subsection{Problems studied}
We first informally define some of the problems we consider in this work in order to properly state our contributions. We provide a formal and more in-depth description of these problems and their relation to other lattice-based problems in \Cref{Section:Definitions}.
The problems in this subsection are all parameterized by a matrix $\Am \in \ZZ_q^{n \times m}$, and either by an amplitude function $f : \Z_q^m \rightarrow \mathbb{C}$ such that $\norm{f}_2 = 1$, or by a set $T \subset \ZZ^m$, where $q,n,m$ are positive integers.

\begin{definition}[$\ISIS(\Am, T)$]
Given a uniformly random vector $\yv \Unif \ZZ_q^n$ as input, the $\ISIS(\Am, T)$ problem asks to find a vector $\xv \in T$ such that $\Am \xv = \yv \mod q$.
\end{definition}
This problem is usually defined for matrices $\Am$ chosen at random and the set $T$ chosen as $\{\xv \in \Z_q^m : \norm{\xv}_2 \le \beta\}$ for a parameter $\beta$. Our results will hold for any choice of matrices and of set $T$, which makes them more general. Chen, Liu, and Zhandry introduced variants of the canonical Learning With Errors problem, namely Search-LWE (\SLWE) and Construct-LWE (\CLWE).

\begin{definition}[$\SLWE(\Am,f)$]
Given the state $\ket{\psi_\sv} = \sum_{\ev \in \Z_q^m} f(\ev) \ket{\Am\T\sv + \ev}$ as input, where $\sv \Unif \ZZ_q^b$ is a uniformly random secret vector, the $\SLWE(\Am,f)$ problem asks to recover $\sv$.
\end{definition}

\begin{definition}[$\CLWE(\Am,f)$]
The $\CLWE(\Am,f)$ problem asks to construct the state
$$
	\ket{W} = \frac{1}{\sqrt{Z}} \sum_{{\sv \in \Z_q^n }}\sum_{\ev \in \Z_q^m} f(\ev) \ket{\Am\T\sv + \ev}
$$
where $Z$ is a normalization factor.
\end{definition}
In this work, we also introduce an inhomogeneous variant of the Construct LWE problem. 
\begin{definition}[$\ICLWE(\Am,f)$]
Given a uniformly random vector $\yv \Unif \ZZ_q^n$ as input, the $\ICLWE(\Am,f)$ problem asks to construct the unitary $\ket{\yv}\ket{\zerov} \rightarrow \ket{\yv}\ket{W_\yv}$, where we define
$$
	\ket{W_\yv} = \frac{1}{\sqrt{w_\yv}} \sum_{{\sv \in \Z_q^n }}\sum_{\ev \in \Z_q^m}  \omega^{- \yv \cdot \sv}f(\ev) \ket{\Am\T\sv + \ev}
$$
with normalizing factors $w_\yv$.
\end{definition}

\subsection{Contributions}
The goal of this paper is to investigate the two-way  reduction between $\ISIS$ and $\SLWE$. A graphical summary of our results is presented in \Cref{Figure:1}.
	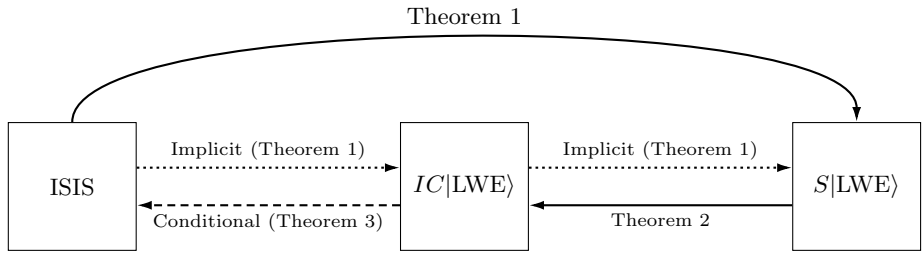
\begin{figure}
	\begin{center}
		\begin{tikzpicture}[
			square/.style={draw, minimum size=1.7cm, align=center},
			solidarrow/.style={-{Latex[length=2mm,width=1.2mm]}, thick},
			dottedarrow/.style={-{Latex[length=2mm,width=1.2mm]}, thick, dotted, dash pattern=on 1pt off 1.5pt},
			dashedarrow/.style={-{Latex[length=2mm,width=1.2mm]}, thick, dashed, dash pattern=on 3pt off 2pt},
			globallabelabove/.style={font=\small, inner sep=0.8pt, above=4pt}, %
			globallabelbelow/.style={font=\small, inner sep=0.8pt, below=4pt}, %
			innerlabelabove/.style={font=\scriptsize, inner sep=0.5pt, above=2pt},
			innerlabelbelow/.style={font=\scriptsize, inner sep=0.5pt, below=2pt},
			node distance=3.5cm
			]

			\node[square] (isis) {ISIS};
			\node[square, right=of isis] (clwe) {$\ICLWE$};
			\node[square, right=of clwe] (slwe) {$\SLWE$};

			\draw[solidarrow] (isis.north) .. controls +(0,1.5) and +(0,1.5) .. (slwe.north)
			node[midway, globallabelabove] {Theorem~1};

			\draw[dottedarrow] ([yshift=0.25cm]isis.east) -- ([yshift=0.25cm]clwe.west)
			node[midway, innerlabelabove] {Implicit (Theorem~1)};
			\draw[dottedarrow] ([yshift=0.25cm]clwe.east) -- ([yshift=0.25cm]slwe.west)
			node[midway, innerlabelabove] {Implicit (Theorem~1)};

			\draw[solidarrow] ([yshift=-0.25cm]slwe.west) -- ([yshift=-0.25cm]clwe.east)
			node[midway, innerlabelbelow] {Theorem~2};
			\draw[dashedarrow] ([yshift=-0.25cm]clwe.west) -- ([yshift=-0.25cm]isis.east)
			node[midway, innerlabelbelow] {Conditional (Theorem~3)};
			
		\end{tikzpicture}
	\end{center}
	\caption{Table of reductions between the different studied problems. An arrow $A \rightarrow B$ means that $A$ reduces to $B$, or in other terms that an efficient algorithm for problem $B$ can be used to construct an efficient algorithm for problem $A$.}
	\label{Figure:1}
\end{figure}
Our first contribution relates to the forward reduction $\ISIS \rightarrow \SLWE$.
\begin{theorem}[Informal]
	If we have an efficient algorithm solving $\SLWE(\Am,f)$ with $Supp(\hf) \subseteq T$, then we can construct an efficient algorithm for $\ISIS(\Am,T)$.
\end{theorem}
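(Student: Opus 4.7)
The plan is a two-stage reduction that goes through the inhomogeneous $\CLWE$ state $\ket{W_\yv}$. Given an ISIS instance $(\Am,\yv)$, I would first use the $\SLWE$ oracle to (approximately) construct
\[ \ket{W_\yv} \;\propto\; \sum_{\sv\in\Z_q^n}\omega^{-\yv\cdot\sv}\sum_{\ev\in\Z_q^m} f(\ev)\,\ket{\Am\T\sv+\ev}, \]
and then apply the $\QFTw$ over $\Z_q^m$ and measure. A direct Fourier computation, using $(\Am\T\sv+\ev)\cdot\xv = \sv\cdot(\Am\xv)+\ev\cdot\xv$ together with the fact that $\sum_\sv \omega^{(\Am\xv-\yv)\cdot\sv}$ vanishes on any $\xv$ with $\Am\xv\neq\yv$, gives $\QFTw(\ket{W_\yv}) \propto \sum_{\xv:\,\Am\xv=\yv}\hf(\xv)\ket{\xv}$. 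Measuring this state returns some $\xv$ satisfying $\Am\xv=\yv$, and the hypothesis $\mathrm{Im}(\hf)\subseteq T$, read as $\mathrm{supp}(\hf)\subseteq T$, forces $\xv\in T$, which is exactly a valid ISIS witness.

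To prepare $\ket{W_\yv}$ I would apply the standard ``use SLWE to erase $\sv$'' recipe. Start from $\ket{\yv}$, apply an inverse $\QFTw$ over $\Z_q^n$ to get $q^{-n/2}\sum_\sv \omega^{-\yv\cdot\sv}\ket{\sv}$, then coherently build $\ket{\psi_\sv}$ in a second register by preparing $\sum_\ev f(\ev)\ket{\ev}$ and adding $\Am\T\sv$. This yields $q^{-n/2}\sum_\sv \omega^{-\yv\cdot\sv}\ket{\sv}\ket{\psi_\sv}$. Then call the $\SLWE$ algorithm $\aa$ on the second register, XOR its output into the first register, and uncompute $\aa$; in the ideal case the first register resets to $\ket{0}$, leaving $\ket{W_\yv}$ in the second register. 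Note that the $\QFTw$-and-measure step is really a reduction $\ISIS\to\ICLWE$, while this state-preparation step is a reduction $\ICLWE\to\SLWE$, so the approach naturally factors through $\ICLWE$ as shown in Figure~\ref{Figure:1}.

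The main obstacle, and the key point of this theorem compared to the earlier reduction of~\cite{CT25}, is making the second step robust when $\aa$ is a genuinely imperfect quantum algorithm. Modelling $\aa$ as a unitary with $\aa\ket{\psi_\sv}\ket{0} = \sqrt{p_\sv}\ket{\mu_\sv}\ket{\sv} + \sqrt{1-p_\sv}\ket{\mu_\sv^{\perp}}$ and average success probability $\E_\sv[p_\sv]=\eta$, the XOR-then-uncompute step only approximately resets the first register to $\ket{0}$; moreover the states $\ket{\psi_\sv}$ need not be mutually orthogonal, so the analysis has to be carried out at the level of amplitudes rather than success probabilities. The plan is to decompose the output into a ``good'' branch of weight roughly $\eta$ that matches the ideal construction and an ``error'' branch whose Fourier mass on $\{\xv:\Am\xv=\yv\}$ can be controlled via an almost-as-good-as-new / gentle-measurement argument, and then amplify the resulting success probability if needed. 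This upgrades any polynomial-time $\SLWE$ solver into a polynomial-time $\ISIS$ solver, yielding Theorem~1.
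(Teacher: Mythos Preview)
Your high-level plan matches the paper's: both construct (an approximation of) $\ket{W_\yv}$ from the $\SLWE$ oracle and then $\QFTw$-and-measure to extract an ISIS witness, and both implicitly factor the reduction through $\ICLWE$. The technical execution differs in two places worth flagging. First, where you propose the plain ``apply $\aa$, XOR into the $\sv$ register, uncompute $\aa$'' pattern, the paper inserts a measurement between the subtract and the uncompute: it measures the register holding $\sv-\sv'$, restarts if the outcome is not $\zerov$ (succeeding with probability $p$, at a $1/p$ cost in repetitions), and only then applies $U^\dagger$. This postselection step produces, for each $\sv$, a state $\sqrt{p}\,\ket{\psi_\sv}\ket{0}+\sqrt{1-p}\,\ket{Z_\sv}$ with $\ket{Z_\sv}\perp\ket{\psi_\sv}\ket{0}$, which is exactly the decomposition the error analysis hinges on; your direct uncompute would leave additional cross-$\sv'$ entanglement that is harder to track. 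Second, before doing any of this the paper \emph{symmetrizes} the $\SLWE$ unitary (via the shift symmetry $\ket{\psi_{\sv+\tv}}=S_{\Am\T\tv}\ket{\psi_\sv}$) so that every per-$\sv$ success amplitude equals $\sqrt{p}$; without this uniformity the cross terms in the final Fourier calculation do not sum cleanly.

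For the error term itself, the paper does not use a gentle-measurement argument. It computes directly: the orthogonality $\braket{Z_\sv^{0}}{\psi_\sv}=0$ forces the ``bad'' Fourier amplitudes of $\ket{Z_\sv^{0}}$ supported on $T$ to equal (in magnitude) those supported on $\overline{T}$, and the latter are bounded by $\sqrt{\eta}$ via Cauchy--Schwarz, where $\eta=\sum_{\xv\notin T}|\hf(\xv)|^2$. This is what yields the final success probability $p(1-\eta)-2\sqrt{p(1-p)\eta}$. Your almost-as-good-as-new sketch might be made to work, but the paper's route is more direct and gives a sharper constant.
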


This theorem works also when the algorithm solves $\SLWE(\Am,f)$ with some non-negligible probability $p$ and we also show how to relax the constraint $Supp(\hf) \subseteq T$. The only requirement, which appears in all the works using this reduction, is that the state $\sum_{\ev \in \Z_q^m} f(\ev)\ket{\ev}$ is efficiently sampleable.

In order to prove this result, we start from the reduction of~\cite{CT25} and make a few notable changes. Our main goal was to ensure the reduction of~\cite{CT25} holds for any quantum algorithm solving $\SLWE(\Am,f)$, since several applications actually use a full quantum algorithm here. First, we change slightly the way the algorithm works in order to be used with quantum algorithms. Our main contribution then is to change the error analysis occurring from an error in the $\SLWE(\Am,f)$ algorithm. We fully take advantage of the fact that we want to solve the inhomogeneous variant $\ISIS(\Am,T)$ in order to perform a tighter analysis.  Another thing to note is that we rephrase this reduction in terms of lattice problems while~\cite{CT25} phrases the reduction in terms of coding problems. There is a straightforward correspondence between the two but our formulation makes it easier to relate with existing work on the subject. 

The above theorem is very general but a natural question is how to choose the function $f$ and the set $T$? We provide examples that are standard in the literature. Let $\rho_{r}$ be the discrete Gaussian distribution on $\Z_q$ of parameter $r$. The $\LWE$ problem is most often defined with the error distribution $\rho_r^m$, which means the corresponding natural quantum problem is $\SLWE(\Am,\sqrt{\rho_r^m})$. On the other hand, the most standard setting for $\ISIS$ is to consider $T = \{\xv \in \F_q^n : \norm{\xv}_2 \le l\}$, in which case the corresponding problem is denoted $\ISIS_2(\Am,l)$. Applying our theorem with the Gaussian error function and $T$ we can show
\begin{proposition}\label{Proposition:Gaussians}
	Let $\Am \in \F_q^{n \times m}$ and a real number $r > 0$. If we have an efficient quantum algorithm for $\SLWE(\Am,\sqrt{\rho_r^m})$, then we have an efficient quantum algorithm for $\ISIS_2(\Am,\sqrt{m} \frac{q}{\sqrt{8\pi}r})$. In particular, if we have an efficient algorithm for $\LWE(\Am,{\rho_r^m})$, then we have an efficient quantum algorithm for $\ISIS_2(\Am,\sqrt{m} \frac{q}{\sqrt{8\pi}r})$. 
\end{proposition}

\COMMENT{We can directly apply this theorem with standard lattice-based parameters. Informally, if we consider a discrete Gaussian on $\Z_q$ with standard deviation $\sigma$, which we denote $\chi_\sigma$, then an algorithm for solving $\SLWE(\Am,\chi_\sigma^m)$ for a randomly chosen matrix $\Am \in \Z_q^{n \times m}$ can be used to solve $\ISIS(\Am,T)$ with $T = \{\xv \in \Z_q^m : \norm{\xv}_2 \le \frac{q\sqrt{m}}{\sigma}\}$ at least for $\sigma \ll q$. This is directly implied by our theorem, since $\widehat{\chi_\sigma} \approx \chi_{\frac{q}{\sigma}}$ and the distribution $\chi_{\frac{q}{\sigma}}^m$ is highly concentrated around words of norm $\sqrt{m}\frac{q}{\sigma}$.}

 Then, we study the reverse reduction. Our reduction $\ISIS \rightarrow \SLWE$ goes through the intermediate problem which is $\ICLWE$. We do not formally write it as a reduction $\ISIS \rightarrow \ICLWE$ and $\ICLWE \rightarrow \SLWE$ because it would induce a loss in some of the parameters but our \Cref{Theorem:1} actually implicitly uses this route. Our first result for the reverse reduction is that $\SLWE$ reduces to $\ICLWE$.

\begin{theorem}[informal]
	If we have an efficient quantum algorithm that solves $\ICLWE(\Am,f)$ then we have an efficient quantum algorithm for $\SLWE(\Am,f)$.
\end{theorem}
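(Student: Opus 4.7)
The plan is to exploit the Fourier duality between the $\SLWE$ states $\ket{\psi_\sv} = \sum_\ev f(\ev)\ket{\Am\T\sv + \ev}$ and the $\ICLWE$ states $\ket{W_\yv}$. Directly from the definition of $\ket{W_\yv}$ one obtains the identity
\begin{equation*}
    \sum_{\sv \in \Z_q^n}\omega^{-\yv\cdot\sv}\ket{\psi_\sv} \;=\; \sqrt{w_\yv}\,\ket{W_\yv},
\end{equation*}
so that the $\ket{W_\yv}$'s are, up to normalisation, the inverse QFT of the $\ket{\psi_\sv}$'s. Writing the $\ICLWE$ solver as a unitary $V$ with $V\ket{\yv}\ket{0}=\ket{\yv}\ket{W_\yv}$, the idea is to use $V^\dagger$ to \emph{undo} this Fourier encoding and thereby recover $\sv^{*}$ from the input $\ket{\psi_{\sv^{*}}}$.

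Given $\ket{\psi_{\sv^{*}}}$ on a register $B$, I would proceed in five steps: (i) prepare a fresh register $A$ in the uniform superposition over $\Z_q^n$; (ii) apply the classical unitary $\ket{\sv'}_A\ket{\uv}_B \mapsto \ket{\sv'}_A\ket{\uv - \Am\T\sv'}_B$, which produces $\tfrac{1}{\sqrt{q^n}}\sum_{\sv'}\ket{\sv'}_A\ket{\psi_{\sv^{*}-\sv'}}_B$; (iii) apply $\QFTw$ on $A$ and, after reindexing $\sv''=\sv^{*}-\sv'$ and invoking the Fourier identity above, obtain $\tfrac{1}{q^n}\sum_\yv \omega^{\yv\cdot\sv^{*}}\sqrt{w_\yv}\,\ket{\yv}_A\ket{W_\yv}_B$; (iv) apply $V^\dagger$ on $(A,B)$ to uncompute $\ket{W_\yv}$ into $\ket{0}$; (v) apply $\QFTw^\dagger$ on $A$ and measure.

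In the generic case where the shifted supports $\Am\T\sv + \mathrm{supp}(f)$ are pairwise disjoint, the $\ket{\psi_\sv}$'s are orthonormal, which forces $w_\yv=q^n$ for every $\yv$; the state on $A$ after step (v) is then exactly $\ket{\sv^{*}}$ and the measurement succeeds with probability $1$. In general, Parseval gives $\sum_\yv w_\yv = q^{2n}$, so by Cauchy--Schwarz the success probability equals $(\sum_\yv \sqrt{w_\yv})^2 / q^{3n}\in[0,1]$ and is close to $1$ as soon as the $w_\yv$'s are close to uniform.

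The main obstacle is propagating an approximate $\ICLWE$ oracle: if $V$ only prepares a state $\ket{\widetilde W_\yv}$ at distance at most $\delta$ from $\ket{W_\yv}$, I would track the error through steps (iii)--(v) by a triangle-inequality argument on the intermediate states, yielding a reduction whose success probability degrades by at most $O(\delta)$ compared with the exact case. The secondary technicality, namely the non-uniformity of $w_\yv$ when the $\ket{\psi_\sv}$'s are only approximately orthogonal, is controlled by the same kind of norm estimate combined with a bound on the pairwise overlaps $\langle\psi_\sv|\psi_{\sv'}\rangle$.
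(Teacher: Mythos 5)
Your proposal is correct and follows essentially the same route as the paper: you construct the tagged state $\frac{1}{q^n}\sum_{\yv}\omega^{\yv\cdot\sv^*}\sqrt{w_\yv}\,\ket{\yv}\ket{W_\yv}$ and then apply $V^\dagger$ followed by an inverse $\QFTw$ and a measurement, which is exactly the paper's measurement in the basis $\{\ket{B_\sv}\}$, and your success probability $\bigl(\E_{\yv}\bigl[\sqrt{w_\yv/q^n}\bigr]\bigr)^2$ matches the paper's bound in the exact-oracle case. The only differences are cosmetic or minor: you attach the $\yv$-register via a controlled shift on a fresh uniform ancilla plus a $\QFTw$ instead of computing $\Am\xv$ on the Fourier side of $\ket{W_\yv}$ (the paper's Lemma~\ref{Lemma:Addy}), and your treatment of an imperfect $\ICLWE$ oracle is only sketched where the paper carries out a short Cauchy--Schwarz computation on $\E_{\yv}\bigl[\sqrt{w_\yv/q^n}\,\gamma_\yv\bigr]$.
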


This theorem does not require anything on the algorithm for $\ICLWE(\Am,f)$, and is robust to errors in the algorithm. One thing to take into account is that the problem $\SLWE(\Am,f)$ can be intractable from an information theoretic point of view for certain choices of $f$. We explicit this regime and show that our theorem holds for any $f$ such that the problem $\SLWE(\Am,f)$ is tractable.

Finally, we investigate the relation between $\ICLWE(\Am,f)$ and $\ISIS(\Am,T)$. Recall that in $\ICLWE(\Am,f)$, we want to compute the unitary $\ket{\yv}\ket{\zerov} \rightarrow \ket{\yv}\ket{W_\yv}$. We can compute the Fourier transform of $\ket{W_\yv}$.
$$ \ket{\widehat{W_\yv}} \sim \sum_{\xv \in \Lambda_\yv^\bot(\Am)} \hf(\xv) \ket{\xv},$$
where $\Lambda_\yv^\bot(\Am) = \{\xv \in \Z^m : \Am \xv = \yv \mod q\}$ denotes the $q$-ary shifted dual lattice associated to $\Am$.
On the other hand an algorithm for $\ISIS(\Am,T)$ outputs a string $\xv \in \Lambda_\yv^\bot(\Am) \cap T$, so if we take $\hf = \one_T$, the two problems look quite similar. Unfortunately, even from an algorithm that outputs  a uniformly random element of $\Lambda_\yv^\bot(\Am) \cap T$, it is not clear how to construct the state $\ket{\widehat{W_\yv}}$.

Under some condition on the algorithm used for $\ISIS$, we can prove the missing part of the reduction, namely that $\ICLWE \rightarrow \ISIS$. More precisely, we introduce the notion of full support randomness-recoverable algorithms, which are randomized classical algorithms from which we can recover the randomness from the solution and that has an output distribution which is uniform over the solutions.  We manage to prove the following
\begin{theorem}[Informal]\label{Theorem:3}
	Assume we have an efficient classical algorithm for $\ISIS(\Am,T)$ which is full support randomness-recoverable, then we have an efficient quantum algorithm for $\ICLWE(\Am,f)$, with $\hf = \one_T$. Using \Cref{Theorem:2}, this means we can also obtain an algorithm for $\SLWE(\Am,f)$.
\end{theorem}

While these conditions on the algorithm for $\ISIS$ seem very strict and potentially unachievable for real algorithms, we show that this is not the case. Our final contribution is to show how to recover an algorithm of~\cite{BJK+25}. This result was phrased in terms of the $\overline{\text{EDCP}}$ problem, which is a variant of the \emph{Extrapolated Dihedral Coset Problem}.

\begin{definition}[$\SEDCP(\Am,T)$]
	Let $q,n,m$ be positive integers, a set $T \subseteq \Z_q^m$ and a matrix $\Am \in \F_q^{n \times m}$. We sample $\sv \Unif \Z_q^n$. Given $\ket{\phi_\sv} = \frac{1}{\sqrt{|T|}}\sum_{\yv \in T} \omega_q^{\Am\T\sv \cdot \yv}\ket{\yv}$, the goal is to recover $\sv$
\end{definition}

In~\cite{BJK+25}, this problem is defined where $T = R^m$ with $R \subseteq \Z_q$. In this case, we can rewrite 
$\ket{\phi_{\sv}} = \bigotimes_{i = 1}^m \left(\frac{1}{\sqrt{|R|}}\sum_{j \in R} \omega_q^{j (\av_i \cdot \sv)}\ket{j}\right),$
where $\av_i$ is the $i^{th}$ line of $\Am\T$.

How is this problem related to our reverse reduction? First notice that $\SEDCP(\Am,T)$ is equivalent to the $\SLWE(\Am,f)$ problem with $\hf = \frac{1}{\sqrt{|T|}} \one_T$. Indeed, if we write the states arising in the $\SLWE$ problem $\ket{\psi_{\sv}}  = \sum_{\ev \in \Z_q^m} f(\ev) \ket{{\Am}\T\sv + \ev}$, we can compute 
\begin{align*}
	\widehat{\ket{\psi_{\sv}}} = \sum_{\yv \in \Z_q^m} \omega_q^{\Am\T\sv \cdot \yv}\left(\frac{1}{\sqrt{q^m}}\sum_{\ev \in \Z_q^m} \omega_q^{\yv \cdot \ev}f(\ev)\right)\ket{\yv} = \sum_{\yv \in \Z_q^m} \omega_q^{\Am\T\sv \cdot \yv} \hf(\yv)\ket{\yv} =\ket{\phi_\sv},
\end{align*}

where $\ket{\phi_\sv}$ is the state used in $\SEDCP$. This means we can go from one problem to the other just by applying a quantum Fourier transform on our input state. Our reverse reduction can therefore be naturally restated with this problem

\begin{theorem}\label{Theorem:3bis}
	Assume we have an efficient classical algorithm for $\ISIS(\Am,T)$ which is full support randomness recoverable, then we have an efficient quantum algorithm for $\SEDCP(\Am,T)$.
\end{theorem}

One of the main results of Bai et al. is to prove the following statement

\begin{proposition}[\cite{BJK+25}]\label{Proposition:BJK+25}
	For parameters $n,m,q = 2^l$ with $q = \poly(n)$ and $m = 2^{O(\log(n)\log(q))}$, there exists a quantum algorithm for
	$\SEDCP(\Am,\Z_2^m)$ running in time $2^{O(\log(n)\log(q))}$ in the case $\Am$ is randomly chosen in $\Z_q^{n \times m}$.
\end{proposition}

They also show that the $\emph{Extrapolated Dihedral Coset Problem}$ they consider naturally reduces to $\SEDCP$. The quantum algorithm used to prove this proposition is a Kuperberg style algorithm but strongly leverages on the fact that $q$ is a small power of $2$, with ideas that were already developed by Bonnetain and Naya-Plasencia~\cite{BN18} in a somewhat different context.

When looking at this algorithm carefully, this quantum algorithm also strongly resembles a classical algorithm for $\SIS_{\infty}$ in the same parameter regime (by replacing $n$ with $m-n$). This algorithm was presented in~\cite[Appendix A of the ArXiv version]{CLZ22}, and attributed to Regev. A natural question therefore becomes whether this algorithm can be used in our reverse reduction

We show that it is actually possible to modify this classical algorithm to the case of $\ISIS$ and such that it satisfies all the requirements of \Cref{Theorem:3}.  

\begin{theorem}[Informal]\label{Theorem:4}
	One can adapt the classical algorithm for $\ISIS$ by Regev to make it randomness-recoverable and full support. Plugging this algorithm into \Cref{Theorem:3bis}, we can recover \Cref{Proposition:BJK+25}.
\end{theorem}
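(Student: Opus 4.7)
The starting point is Regev's classical algorithm for $\SIS_{\infty}$ in the regime $q = 2^l$, presented in Appendix~A of the ArXiv version of~\cite{CLZ22}. This is a Kuperberg-flavoured layered procedure that combines columns of $\Am$ pairwise to successively cancel the low bits of the target, eventually producing a small-norm integer combination. My first step would be to lift it from the homogeneous to the inhomogeneous case: given $\Am \in \Z_q^{n \times m}$ and $\yv \in \Z_q^n$, use one reserved batch of columns to absorb $\yv$ and run the bit-cancellation layers on the rest, arriving at an output $\xv \in T$ with $\Am\xv = \yv \bmod q$, where $T$ is chosen so that $\one_T$ coincides (up to a symmetrisation of signs) with the $\one_{\Z_2^m}$ of Proposition~\ref{Proposition:BJK+25}. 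The parameters $q = \poly(n)$ and $m = 2^{O(\log(n)\log(q))}$ come directly from the combinatorial analysis of~\cite{CLZ22} and match those in Proposition~\ref{Proposition:BJK+25}.

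To achieve randomness recoverability, I would route every random decision through a single explicit tape $\rv$ consisting of (i) the initial random permutation of the columns of $\Am$ and (ii) any tie-breaking bits, while making the matching rule at each layer canonical (e.g.\ lexicographic) given the current labelling. The signed support of the output $\xv$ then encodes the whole combination tree: the nonzero positions identify which initial columns participated, and the deterministic pairing rule reconstructs the permutation restricted to them; unused columns do not affect $\xv$, so the remainder of the permutation may be completed canonically. This yields an efficient recovery function $\rr$ with $\rr(\xv) = \rv$ for every output $\xv$ of the algorithm, as required by Theorem~\ref{Theorem:3}.

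The main obstacle will be the distributional requirement: the output distribution must be statistically close to uniform on $\Lambda_\yv^\bot(\Am) \cap T$. Regev's raw procedure is biased toward certain combination shapes, so I would add a uniform re-randomisation over valid combination trees at each layer and rely on the fact that, for a random $\Am$ in the target regime, $\Lambda_\yv^\bot(\Am) \cap T$ has predictable size. A double-counting argument over pairs (tree, solution) should then show that the induced output distribution is $o(1)$-close to uniform with high probability over $\Am$. Once both hypotheses are met, plugging the resulting algorithm into Theorem~\ref{Theorem:3} yields a quantum algorithm for $\ICLWE(\Am, f)$ with $\hf = \one_{\Z_2^m}$, and a further application of Theorem~\ref{Theorem:2} produces a quantum algorithm for $\SLWE(\Am, f)$ in exactly the regime of Proposition~\ref{Proposition:BJK+25}.
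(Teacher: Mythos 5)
Your high-level plan matches the paper's strategy (lift Regev's algorithm to the inhomogeneous setting, make the randomness explicit and recoverable, argue closeness to solution-uniformity, then chain Theorems~\ref{th:iclwe-red-to-isis} and~\ref{Theorem:2}), but the concrete mechanism you propose has a genuine gap at the step that matters most. Perfect randomness recoverability, as required by Theorem~\ref{th:iclwe-red-to-isis}, demands $\rr(\yv,\aa(\yv;r)) = r$ for \emph{every} tape $r$, which forces the map $r \mapsto \aa(\yv;r)$ to be injective. Your tape contains a random permutation of the columns of $\Am$, and you concede that unused columns do not affect the output, so that the recovery function can only return a canonical completion of the permutation; for any non-canonical tape this returns the wrong $r$, and injectivity fails. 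The paper sidesteps this entirely by choosing the randomness to be exactly the data that is linearly readable off the solution: the target is split into additive shares $\yv_1,\dots,\yv_{m'}$ summing to $\yv \bmod 2$ (rather than absorbed by one reserved batch of columns, as you suggest), each block of $2n+1$ columns is extended deterministically to a full-rank $\tAm_i \in \Z_2^{2n\times(2n+1)}$, and an auxiliary syndrome $\uv_i$ is sampled so that the block has exactly two solutions. Computing $\tAm_i \xv_F \bmod 2$ recovers $(\yv_i \| \uv_i)$ exactly, and the recursion propagates this down through the $l$ levels; on abort the algorithm outputs $(\bot,r)$ so recoverability holds unconditionally.

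The second gap is the uniformity argument. You propose adding a re-randomisation layer and a double-counting argument over combination trees, which is both vague and unnecessary in the paper's construction: there, one counts that the algorithm consumes exactly $m - nl = \log_2(2^m/q^n)$ bits of randomness, proves that distinct tapes yield distinct solutions (a consequence of the same linear-readability that gives recoverability, plus the fact that each $\zv_i \neq \zerov$), and concludes that the injective image covers a $1 - negl(n)$ fraction of $\Z_2^m \cap \Lambda^\bot_\yv(\Am)$, which is precisely $\eps$-solution-uniformity with $\eps = negl(n)$. Concentrating the inhomogeneity in one reserved block, as you propose, would additionally bias the output distribution on that block and make this counting argument harder, not easier. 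To repair your proposal you would need to replace the permutation tape by randomness that is bijectively encoded in the output, at which point you would essentially be reconstructing the paper's share-and-auxiliary-syndrome design.
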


\subsection{Comparison with Previous Works}
\label{sec:comp-previous-works}
As mentioned above, the authors of \cite{CT25} present a reduction from $\ISIS$ to $\SLWE$ for some class of $\SLWE$ solvers.
Our result generalizes theirs as it works for all $\SLWE$ solvers. Our formulation in terms of $Supp(\hf)$ makes is also very powerful, and this was crucially used in a subsequent work \cite{chailloux2025opi} (and adapted to codes) to improve Decoded Quantum Interferometry~\cite{JSW+24}.

Additionally, in terms of proof techniques, we provide a much sharper analysis of the errors %
Instead of just adding an error term that depends on the solver's success probability, we use the structure of the error vectors and integrate this in the final success probability.
Note that this is done by not going through the quantum sampling problem $\ICLWE$ but directly studying the reduction between $\ISIS$ and $\SLWE$.

\paragraph{Chain of reductions from $\SLWE$ to $\ISIS$.}
A natural question is to ask whether reduction from$\SLWE$ to $\ISIS$ existed in the literature. Such a reduction can actually be obtained  from the following chain of reductions: $\SLWE \rightarrow \LWE \rightarrow d\LWE \rightarrow \SIS \rightarrow \ISIS$. While the $\SLWE \rightarrow \LWE$ and $\SIS \rightarrow \ISIS$ reductions are lossless, the two other ones are lossy, and importantly, this chain is one-way. 

The main difference is that our reduction is lossless, in the sense that the matrix $\Am$ stays the same throughout the reductions. Moreover, in the regimes where our reverse reduction holds, we can apply the forward and the reverse reduction to recover our original problem, without any losses in the parameters. In this sense, our reverse reduction can be seen as the natural reduction that targets directly $\SLWE$. In particular, we couldn't have recovered the results of \cite{BJK+25} if we used the classical reduction instead of our reduction. 

\subsection{Takeaways and Future Work}

Our first contribution is important as it provides what we hope to be the final form of the reduction $\ISIS \rightarrow \SLWE$. A natural direction for future work is to extend this reduction to codes, which we believe will help address open questions related to the \emph{Decoded Quantum Interferometry} framework. As we mentioned, this result was used in~\cite{chailloux2025opi} to improve on quantum algorithms for the Optimal Polynomial Intersection Problem ($OPI$), as originally defined by~\cite{JSW+24}. 

From a conceptual standpoint, the reverse reduction is new. This is the first time there is an (even conditional) tight reduction between these two problems. It shows that one cannot fundamentally improve this framework based on Regev's reduction if we restrict ourselves to $\SLWE$ and $\ISIS$. When the reduction was phrased between $\ISIS$ and $\LWE$, such a reverse reduction could not be envisioned. Moreover, we showed that the intermediate problem $\ICLWE$ is key to understanding the relation between these problems. 

Our results could also be used to construct other algorithms for $\SLWE$ (or equivalently $\overline{\text{EDCP}}$ defined in~\cite{BJK+25}). We have already recovered some of these results showing that the randomness-recoverability conditions are achievable. It would be interesting to explore other algorithms, such as the one by Chen et al.~\cite{CHL+25}, and investigate whether they fit into our framework. Finally, finding new algorithms for $\SLWE$ could have significant consequences, since there is a reduction $\LWE \rightarrow \SLWE$ shown in~\cite{BKSW18}. While our reverse reduction has some limitations, it nevertheless paves the way for many potential new algorithms for $\SLWE$, obtained via known algorithms for $\ISIS$-type problems and incorporated into the reduction of~\cite{BKSW18}.

\COMMENT{ We also show present a modified construction that allows us to deal with any $q$ (but for which the condition on the infinity norm will be better the more $q$ is composite). The idea also comes from the description of the algorithm~\cite{CLZ22} but we had to tweak it so that it can be extended to the case of $\ISIS$.

\begin{theorem}\label{Theorem:5}
		For some parameters $n,m,q$ with $q = \prod_{i = 1}^k p_i = \poly(n)$ and $m = 2^{O(\log(n)\log(q))}$, there exists an algorithm for $\SLWE(\Am,f)$ in the case $\Am$ is randomly chosen in $\Z_q^{n \times m}$ and $f$ is the function such that $\hf = \one_{T}$ where 
		$$ T = \left\{\xv \in \Z_q^m : \norm{\xv}_{\infty} \le \sum_{i = 1}^k {\prod_{j = 1}^i \lfloor \frac{p_j}{2} \rfloor}\right\}.$$
\end{theorem}
\paul{i changed the bound to $\sum_{i = 1}^k {(k - i + 1) \lfloor \frac{p_j}{2} \rfloor}$, harder to read but tighter. not sure it's better}

Because of the adaptation to $\ISIS$, we actually have a little loss in the infinity norm constraint. This result to some extent solves a technical issue presented in~\cite{BJK+25}, where their techniques were limited to the case where $q$ is a power of two. This theorem however still cannot be used directly to construct efficient quantum algorithms for $\LWE$ via the~\cite{BKSW18} reduction. }

\subsection*{Acknowledgements}
\if \isanonymous 0{%
	We acknowledge funding from the French PEPR integrated projects EPIQ (ANR-22-PETQ-007), PQTLS (ANR-22-PETQ-008) and HQI (ANR-22-PNCQ-0002) all part of plan France 2030.}
\else {}
\fi
We thank the anonymous reviewers for their insightful comments on our previous submission.
\section{Preliminaries}
\label{sec:preli}
\setcounter{theorem}{0}
\subsection{Notations}
For a positive integer $q$, we write $\Z_q$ for $\Z/{q\Z}$. We write $\Z_q = \{-\lfloor\frac{q}{2}\rfloor,\dots,\lfloor\frac{q-1}{2}\rfloor\}$. Vectors with elements in $\Z_q$ will be denoted with bold small letters such as $\xv,\yv$ and matrices with elements in $\Z_q$ will be denoted with capital bold letters such as $\Am,\Hm$.
For any function $f: \ZZ_q^m \rightarrow \CC$, we define the $\ell_2$ norm $\Vert f \Vert_2$ of $f$ as $\Vert f \Vert_2 = \sum_{\xv \in \ZZ_q^m} |f(\xv)|^2$.

The canonical $q$-th root of unity is denoted $\omega_q = e^{2i\pi/q}$. We will consider only roots of unity $\omega_q$ for the alphabet size $q$ and will usually omit the subscript $q$.

{For any probability distribution $D$, we write $x \gets D$ to indicate that $x$ is sampled according to $D$.
We abuse the notation and write $x \gets S$ for any set $S$ to indicate that $x$ is sampled uniformly from $S$.}
We use the following definition for statistical distance between any two probability distributions represented by their probability mass functions $p$ and $q$: $\Delta(p, q) = \frac{1}{2} \sum_x |p(x) - q(x)|$.

\begin{definition}
	Let $q,n,m$ be positive integers and a matrix $\Am \in \Z_q^{n \times m}$. For each $\yv \in \Z_q^n$, we define the shifted dual lattice
	$$ \Lambda^\bot_\yv(\Am) = \{\xv \in \Z^m : \Am\T \xv = \yv \mod q \}.$$
\end{definition}

\paragraph{Fourier Transform and Quantum Fourier Transform}
For a function $f : \Z_q \rightarrow \mathbb{C}$, we define its Fourier transform 
$$ \hf(x) = \frac{1}{\sqrt{q}} \sum_{y \in \Z_q} \omega^{xy} f(y).$$
We extend this definition to functions $f : \Z_q^m \rightarrow \mathbb{C}$ and denote the Fourier transform of $f$ as $\hf(\xv) = \frac{1}{\sqrt{q^m}} \sum_{\yv \in \Z_q^n} \omega^{\xv \cdot \yv} f(\yv).$
The Quantum Fourier Transform on $\Z_q$ is the unitary operations
$$ QFT_{\Z_q}(\ket{x}) = \ket{\widehat{x}} = \frac{1}{\sqrt{q}}\sum_{y \in \Z_q} \omega^{xy} \ket{y}.$$
Again, it is extended to a unitary acting on $\Z_q^m$ as follows: $QFT_{\Z_q^m}(\ket{\xv}) = \ket{\widehat{\xv}} = \frac{1}{\sqrt{q^m}} \sum_{\yv \in \Z_q^m} \omega^{\xv \cdot \yv}\ket{\yv}$.
{We omit the subscript and simply write $QFT$ when clear from the context.}

\subsection{Definition of computational problems}\label{Section:Definitions}
The Short Integer Solution problem is one of the cornerstones of lattice-based cryptography. 
\begin{definition}[Short Integer Solution $\SIS(\Am,\beta)$]
	Let $q,n,m$ be positive integers, and $\beta$ and a matrix $\Am \in \Z_q^{n \times m}$. The goal is, given $\Am$, to find $\xv \in \Z^m \backslash \{\zerov\}$ such that $\Am \xv = \zerov \mod q$ and $\norm{\xv}_2 \le \beta$.
\end{definition}

The above problem is usually defined for a uniformly random $\Am \in \Z_q^{n \times m}$ but is also defined for structured matrices and/or with other norms such as the infinity norm. This problem also has what is called an inhomogeneous variant

\begin{definition}[Inhomogeneous Short Integer Solution $\ISIS(\Am,\beta)$]
	Let $q,n,m,\beta$ be positive integers and a matrix $\Am \in \Z_q^{n \times m}$. We sample $\yv \Unif \Z_q^n$ and the goal is, given $(\Am,\yv)$, to find $\xv \in \Z^m$ such that $\Am \xv = \yv \mod q$ and $\norm{\xv}_2 \le \beta$.
\end{definition}

We generalize these definitions by replacing the condition $\norm{\xv}_2 \le \beta$ with the condition $\xv \in T$ for a given set $T$. 
\begin{definition}[Short Integer Solution $\SIS(\Am,T)$]
	Let $q,n,m$ be positive integers,  a matrix $\Am \in \Z_q^{n \times m}$ and $T \subseteq \Z^m$. The goal is, given $\Am$ and $T$, to find $\xv \in T\backslash \{\zerov\}$ such that $\Am \xv = \zerov \mod q$.
\end{definition}

\begin{definition}[Inhomogeneous Short Integer Solution $\ISIS(\Am,T)$]
	Let $q,n,m$ be positive integers,  a matrix $\Am \in \Z_q^{n \times m}$ and $T \subseteq \Z^m$. We sample $\yv \Unif \Z_q^n$ and the goal is, given $(\Am,\yv)$ and $T$, to find $\xv \in  T$ such that $\Am \xv = \yv \mod q$.
\end{definition}

We now define a general form of the Learning with errors problem,  for an alphabet size $q$, dimension $n$ and number of samples $m$.
\begin{definition}[Learning With Errors $\LWE(\Am,p)$]
	Let $q,n,m$ be positive integers, a probability distribution $p$ on $\Z_q^m$ and a matrix $\Am \in \Z_q^{n \times m}$. We sample $\sv \Unif \Z_q^n$ and $\ev \Unif p$. Given $(\Am,\Am\T\sv + \ev)$, the goal is to recover $\sv$.
\end{definition}
This problem is commonly studied in the case where $\Am$ is uniformly chosen in $\Z_q^{n \times m}$ and $p = \rho_r^m$ where $\rho_r$ is a discretized Gaussian distribution on $\Z_q$ of parameter $r$. We now define the variants explicitly introduced in~\cite{CLZ22}, where the noise is in quantum superposition.

\begin{definition}[$\SLWE(\Am,f)$]
	Let $q,n,m$ be positive integers, a function $f : \Z_q^m \rightarrow \mathbb{C}$ such that $\norm{f}_2 = 1$ and a matrix $\Am \in \Z_q^{n \times m}$. We sample $\sv \Unif \Z_q^n$. Given $\ket{\psi_\sv} = \sum_{\ev \in \Z_q^m} f(\ev) \ket{\Am\T\sv + \ev}$, the goal is to recover $\sv$.
\end{definition}

Notice that by measuring $\ket{\psi_\sv}$, one can recover a random $\Am\T\sv + \ev$ for $\ev \Unif |f|^2$. This immediately implies that $\SLWE(\Am,f)$ is easier than $\LWE(\Am,|f|^2)$. 

\begin{definition}[$\CLWE(\Am,f)$]
	Let $q,n,m$ be positive integers, a function $f : \Z_q^m \rightarrow \mathbb{C}$ such that $\norm{f}_2 = 1$ and a matrix $\Am \in \Z_q^{n \times m}$. The goal is to construct the unit vector 
	$$ \ket{W} = \frac{1}{\sqrt{Z}} \sum_{{\sv \in \Z_q^n }}\sum_{\ev \in \Z_q^m} f(\ev) \ket{\Am\T\sv + \ev},$$
	where $Z$ is a normalization factor.
\end{definition}

In this work, we introduce an inhomogeneous variant of this problem, which will be useful for our reductions. 

\begin{definition}[$\ICLWE(\Am,f)$]
	Let $q,n,m$ be positive integers, a function $f : \Z_q^m \rightarrow \mathbb{C}$ such that $\norm{f}_2 = 1$ and a matrix $\Am \in \Z_q^{n \times m}$. The goal is to construct the unitary 
	$$\ket{\yv}\ket{\zerov} \rightarrow \ket{\yv}\ket{W_\yv},$$
	where we define 
	$$ \ket{W_\yv} = \frac{1}{\sqrt{w_\yv}} \sum_{{\sv \in \Z_q^n }}\sum_{\ev \in \Z_q^m}  \omega^{- \yv \cdot \sv}f(\ev) \ket{\Am\T\sv + \ev},$$
	with normalizing factors $w_\yv$.
\end{definition}

\begin{remark} \label{Remark:1}
  Defining $\ICLWE$ through the construction of \emph{a unitary} instead of an algorithm constructing $\ket{W_\yv}$ on input $\yv$ might seem an unexpected generalization of the $\CLWE$ problem --- as the existence of such an algorithm does not imply the unitary we define, but merely one that satisfies $\ket{\yv}\ket{\zerov}\ket{\zerov} \rightarrow \ket{\yv}\ket{W_\yv}\ket{\phi_\yv}$ for some garbage state $\ket{\phi_\yv}$.
  However, we note that in the special case of $\CLWE$, defining the problem through the construction of a unitary or an algorithm is equivalent (in this sense, we do generalize this problem), and that we use $\ICLWE$ as an intermediate step in our reduction, and successfully manage to instantiate our final reduction with this condition.
	Furthermore, defining this problem through a unitary turns out to be crucial in our reduction, as it allows to apply the reverse unitary.
\end{remark}

\section{Preliminary calculations around $\SLWE$ and $\ICLWE$ and tractability results}
The goal of this section is to present some preliminary calculations on the $\SLWE,\ICLWE$ and $\ISIS$ problems. We also provide a discussion on the tractability regime of $\SLWE$ from an information theoretic view, which is a direct generalization of results in~\cite{CT24}. 

\subsection{Preliminary calculations}

An important calculation will be to write the states $\ket{\psi_\sv}$ and $\ket{W_\yv}$ appearing respectively in $\SLWE$ and $\ICLWE$ in the Fourier basis.  

\begin{proposition}\label{Proposition:PsiFourier}
	Let $q,n,m$ be positive integers, a function $f : \Z_q^m \rightarrow \mathbb{C}$ such that $\norm{f}_2 = 1$ and a matrix $\Am \in \Z_q^{n \times m}$. For each $\sv \in \Z_q^n$, we define $\ket{\psi_\sv} = \sum_{\ev \in \Z_q^m} f(\ev)\ket{\Am\T \sv + \ev}$. We have 
	$$\ket{\widehat{\psi_{\sv}}} = \sum_{\yv \in \Z_q^n} \omega^{\yv \cdot \sv}  \sum_{\xv \in \Lambda^{\bot}_{\yv}(\Am)}  \hf(\xv) \ket{\xv}.$$
\end{proposition}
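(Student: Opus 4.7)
The plan is to apply the QFT directly to $\ket{\psi_\sv}$ and then reorganize the sum according to the cosets of the dual lattice.

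First, I would write
\[
\ket{\widehat{\psi_\sv}} \;=\; \mathrm{QFT}_{\Z_q^m}\Bigl(\sum_{\ev \in \Z_q^m} f(\ev)\ket{\Am\T\sv + \ev}\Bigr) \;=\; \frac{1}{\sqrt{q^m}}\sum_{\ev,\xv} f(\ev)\,\omega^{(\Am\T\sv+\ev)\cdot \xv}\ket{\xv}.
\]
The key algebraic observation is the adjoint identity $(\Am\T\sv)\cdot \xv = \sv \cdot (\Am\xv)$, which lets me pull out a phase depending only on $\sv$ and $\Am\xv$:
\[
\ket{\widehat{\psi_\sv}} = \sum_{\xv \in \Z_q^m} \omega^{\sv \cdot (\Am\xv)}\,\Bigl(\tfrac{1}{\sqrt{q^m}}\sum_{\ev}\omega^{\ev\cdot\xv}f(\ev)\Bigr)\ket{\xv}= \sum_{\xv \in \Z_q^m} \omega^{\sv \cdot (\Am\xv)}\,\hf(\xv)\,\ket{\xv},
\]
where in the last step I use the definition of the multivariate Fourier transform $\hf$ given in the preliminaries.

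Second, I would partition the sum over $\xv \in \Z_q^m$ by its syndrome $\yv := \Am\xv \bmod q \in \Z_q^n$. By definition of the shifted dual lattice, $\{\xv : \Am\xv = \yv \bmod q\} = \Lambda^\bot_\yv(\Am)$, so
\[
\ket{\widehat{\psi_\sv}} = \sum_{\yv \in \Z_q^n}\;\sum_{\xv \in \Lambda^\bot_\yv(\Am)} \omega^{\sv\cdot\yv}\,\hf(\xv)\,\ket{\xv} = \sum_{\yv \in \Z_q^n}\omega^{\yv\cdot \sv}\sum_{\xv \in \Lambda^\bot_\yv(\Am)}\hf(\xv)\,\ket{\xv},
\]
which is exactly the claimed expression.

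There is no real obstacle here: the argument is a one-line application of QFT followed by the standard trick of moving the transpose across the inner product and then bucketing lattice points by their syndrome. The only thing to be careful about is the convention for $\Lambda^\bot_\yv(\Am)$ (making sure the defining equation matches the $\Am\xv = \yv$ that naturally appears after swapping the transpose), and the fact that no normalization factor is needed in the statement because the normalizing $1/\sqrt{q^m}$ has been absorbed into the definition of $\hf$.
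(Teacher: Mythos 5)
Your proof is correct and follows exactly the same route as the paper's: apply the QFT, move the transpose across the inner product via $\xv\cdot(\Am\T\sv) = (\Am\xv)\cdot\sv$, absorb the $1/\sqrt{q^m}$ into the definition of $\hf$, and bucket the $\xv$ by their syndrome $\Am\xv$. Nothing to add.
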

\begin{proof}
	We write 
	\begin{align*}
		\ket{\widehat{\psi_{\sv}}} & = \frac{1}{\sqrt{q^m}}\sum_{\xv \in \Z_q^m}\sum_{\ev \in \Z_q^m} \omega^{\xv \cdot (\Am\T\sv + \ev)} f(\ev)\ket{\xv} \\
		& = \frac{1}{\sqrt{q^m}}\sum_{\xv \in \Z_q^m} \omega^{\xv \cdot \Am\T\sv} \sum_{\ev \in \Z_q^m} \omega^{\xv \cdot \ev} f(\ev)\ket{\xv} \\
		& = \frac{1}{\sqrt{q^m}}\sum_{\xv \in \Z_q^m} \omega^{\Am \xv \cdot \sv} \sum_{\ev \in \Z_q^m} \omega^{\xv \cdot \ev} f(\ev)\ket{\xv} \\	
		& = \sum_{\xv \in \Z_q^m} \omega^{\Am \xv \cdot \sv} \hf(\xv)\ket{\xv} \\		
		& = \sum_{\yv \in \Z_q^n} \omega^{\yv \cdot \sv}  \sum_{\xv \in \Lambda^{\bot}_{\yv}(\Am)}  \hf(\xv) \ket{\xv}
	\end{align*}
\end{proof}

\begin{proposition}\label{Proposition:WFourier}
	Let $q,n,m$ be positive integers, a function $f : \Z_q^m \rightarrow \mathbb{C}$ such that $\norm{f}_2 = 1$ and a matrix $\Am \in \Z_q^{n \times m}$. For each $\yv \in \Z_q^n$, let
	$$ \ket{W_\yv} = \frac{1}{\sqrt{w_\yv}} \sum_{{\sv \in \Z_q^n }}\sum_{\ev \in \Z_q^m}  \omega^{- \yv \cdot \sv}f(\ev) \ket{\Am\T\sv + \ev}.$$
	Then $\widehat{\ket{W_\yv}} = \frac{q^n}{\sqrt{w_\yv}} \sum_{\xv \in \Lambda^\bot_\yv(\Am)} \hf(\xv)\ket{\xv}.$
\end{proposition}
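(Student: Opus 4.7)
The plan is to reduce this statement directly to Proposition~\ref{Proposition:PsiFourier} via linearity of the QFT. Observe first that one can rewrite
$$\ket{W_\yv} = \frac{1}{\sqrt{w_\yv}} \sum_{\sv \in \Z_q^n} \omega^{-\yv \cdot \sv}\, \ket{\psi_\sv},$$
where $\ket{\psi_\sv} = \sum_{\ev \in \Z_q^m} f(\ev) \ket{\Am\T \sv + \ev}$ is exactly the state appearing in $\SLWE$. So $\ket{W_\yv}$ is just a weighted superposition of the $\ket{\psi_\sv}$, with weights given by characters in $\sv$ indexed by $\yv$.

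Next, I would apply $\QFT_{\Z_q^m}$ to both sides. Since the QFT is linear and the scalars $\omega^{-\yv\cdot\sv}$ are independent of the basis kets, we get
$$\widehat{\ket{W_\yv}} = \frac{1}{\sqrt{w_\yv}} \sum_{\sv \in \Z_q^n} \omega^{-\yv \cdot \sv}\, \widehat{\ket{\psi_\sv}}.$$
Substituting the formula from Proposition~\ref{Proposition:PsiFourier}, namely $\widehat{\ket{\psi_\sv}} = \sum_{\yv' \in \Z_q^n} \omega^{\yv' \cdot \sv} \sum_{\xv \in \Lambda^\bot_{\yv'}(\Am)} \hf(\xv) \ket{\xv}$, and swapping the order of summation to pull the sum over $\sv$ inwards, one obtains
$$\widehat{\ket{W_\yv}} = \frac{1}{\sqrt{w_\yv}} \sum_{\yv' \in \Z_q^n} \left(\sum_{\sv \in \Z_q^n} \omega^{(\yv' - \yv)\cdot \sv}\right) \sum_{\xv \in \Lambda^\bot_{\yv'}(\Am)} \hf(\xv) \ket{\xv}.$$

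The final step is the standard character orthogonality relation on $\Z_q^n$: the bracketed sum equals $q^n$ when $\yv' = \yv$ and $0$ otherwise. Only the term $\yv' = \yv$ survives, yielding the claimed equality
$$\widehat{\ket{W_\yv}} = \frac{q^n}{\sqrt{w_\yv}} \sum_{\xv \in \Lambda^\bot_\yv(\Am)} \hf(\xv)\ket{\xv}.$$
There is no real obstacle here, the argument is essentially bookkeeping of index sets combined with one invocation of Proposition~\ref{Proposition:PsiFourier} and one use of character orthogonality. The only thing to watch is the index mismatch in the definition (the sum is over $\ev \in \Z_q^m$, not $\Z_q^n$), which does not affect the computation since $\ket{\psi_\sv}$ is the same object as in the $\SLWE$ state.
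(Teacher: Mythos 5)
Your proof is correct and rests on the same key step as the paper's: applying the QFT and collapsing the sum over $\sv$ via the character-orthogonality identity $\sum_{\sv \in \Z_q^n} \omega^{(\yv'-\yv)\cdot\sv} = q^n\,\delta_{\yv,\yv'}$. The only (cosmetic) difference is that you factor the computation through Proposition~\ref{Proposition:PsiFourier} instead of expanding $\ket{\widehat{W_\yv}}$ directly as the paper does --- which is in fact exactly the route the paper itself takes for the analogous computation inside the proof of Theorem~\ref{Theorem:1} --- and you correctly flag the $\ev \in \Z_q^n$ versus $\Z_q^m$ typo in the statement as harmless.
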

\begin{proof}
	We write
		\begin{align*}
			\ket{\widehat{W_\yv}} & = \frac{1}{\sqrt{q^m}}\frac{1}{\sqrt{w_\yv}} \sum_{\xv \in \Z_q^m} \sum_{\sv \in \Z_q^n} \sum_{\ev \in \Z_q^m}\omega^{\xv \cdot (\Am\T \sv + \ev)} \omega^{-\yv \cdot \sv} f(\ev)\ket{\xv} \\
			& = \frac{1}{\sqrt{q^m}}\frac{1}{\sqrt{w_\yv}} \sum_{\xv \in \Z_q^m}\sum_{\sv \in \Z_q^n} \omega^{\xv \cdot \Am\T\sv - \yv \cdot \sv}\sum_{\ev \in \Z_q^m} \omega^{\xv \cdot \ev}f(\ev)\ket{\xv} \\
			& = \frac{1}{\sqrt{w_\yv}} \sum_{\xv \in \Z_q^m} \sum_{\sv \in \Z_q^n} \omega^{(\Am \xv - \yv)\cdot \sv} \hf(\xv)\ket{\xv} \\
			& = \frac{q^n}{\sqrt{w_\yv}} \sum_{\xv \in \Lambda^\bot_\yv(\Am)} \hf(\xv)\ket{\xv}
	\end{align*}
  {where in the last equality follows from the identity $\sum_{\sv \in \Z_q^n} \omega^{(\Am \xv - \yv)\cdot \sv} = 0$ if $\Am \xv - \yv \neq 0$, and $q^n$ otherwise.}

\end{proof}
This proposition shows in particular that if one can construct the state $\ket{W_\yv}$ for any $\yv \in \Z_q^n$ with $Im(\hf) \subseteq T$ then one can solve the $\ISIS(\Am,T)$ by applying a Quantum Fourier Transform on $\ket{W_\yv}$ and measuring the resulting state in the computational basis. Also, this proposition directly implies that the $\ket{W_\yv}$ are pairwise orthogonal. 

Finally, we present relations between these states.
\begin{proposition}\label{Proposition:PsiToW}
	\begin{align*}
		\forall \yv \in \Z_q^n, \ \ket{W_\yv} & = \frac{1}{\sqrt{w_\yv}} \sum_{\sv \in \Z_q^n} \omega^{-\yv \cdot \sv} \ket{\psi_\sv} \\
		\forall \sv \in \Z_q^n, \ \ \ket{\psi_\sv} & = \frac{1}{q^n} \sum_{\yv \in \Z_q^n} \omega^{\yv \cdot \sv}\sqrt{w_\yv} \ket{W_\yv}
	\end{align*}
\end{proposition}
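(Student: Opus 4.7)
The plan is to prove the two identities in turn, the first by direct unfolding of the definitions and the second by inverting it using the standard orthogonality of characters on $\Z_q^n$.

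For the first identity, I would simply start from the definition of $\ket{W_\yv}$ and factor the sum over $\ev$ (which only involves $\sv$ through the shift $\Am\T\sv$) from the scalar $\omega^{-\yv\cdot\sv}$:
\begin{align*}
\ket{W_\yv} & = \frac{1}{\sqrt{w_\yv}}\sum_{\sv \in \Z_q^n}\sum_{\ev \in \Z_q^m}\omega^{-\yv\cdot\sv}f(\ev)\ket{\Am\T\sv+\ev} \\
& = \frac{1}{\sqrt{w_\yv}}\sum_{\sv \in \Z_q^n}\omega^{-\yv\cdot\sv}\Bigl(\sum_{\ev\in \Z_q^m}f(\ev)\ket{\Am\T\sv+\ev}\Bigr) \\
& = \frac{1}{\sqrt{w_\yv}}\sum_{\sv \in \Z_q^n}\omega^{-\yv\cdot\sv}\ket{\psi_\sv}.
\end{align*}
So the first identity is essentially the definition rewritten.

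For the second identity, I would plug the expression just obtained into the right-hand side and use the character orthogonality $\sum_{\yv \in \Z_q^n}\omega^{\yv\cdot(\sv-\sv')} = q^n \one_{\sv=\sv'}$. Explicitly:
\begin{align*}
\frac{1}{q^n}\sum_{\yv \in \Z_q^n}\omega^{\yv\cdot\sv}\sqrt{w_\yv}\ket{W_\yv}
& = \frac{1}{q^n}\sum_{\yv \in \Z_q^n}\omega^{\yv\cdot\sv}\sum_{\sv' \in \Z_q^n}\omega^{-\yv\cdot\sv'}\ket{\psi_{\sv'}} \\
& = \frac{1}{q^n}\sum_{\sv' \in \Z_q^n}\Bigl(\sum_{\yv \in \Z_q^n}\omega^{\yv\cdot(\sv-\sv')}\Bigr)\ket{\psi_{\sv'}} \\
& = \ket{\psi_\sv}.
\end{align*}
Note the $\sqrt{w_\yv}$ factors cancel neatly: one appears in the definition of $\ket{W_\yv}$, one in the prefactor, and together they produce a clean expansion with no leftover normalization.

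There is no real obstacle here; the statement is essentially the Fourier inversion on the finite abelian group $\Z_q^n$ applied to the family $(\ket{\psi_\sv})_{\sv \in \Z_q^n}$, with the $\sqrt{w_\yv}$ factors being precisely the scaling that makes both sides of the duality clean. The only thing one should mention for completeness is that the expressions make sense (\emph{i.e.}, $w_\yv > 0$) whenever $\ket{W_\yv}$ is well defined, which is the case for the values of $\yv$ contributing nontrivially to the Fourier dual of $\ket{\psi_\sv}$; for other $\yv$, $\sqrt{w_\yv}\ket{W_\yv}$ should be interpreted as the zero vector, and the identity still holds.
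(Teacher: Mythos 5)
Your proof is correct. The first identity is handled exactly as in the paper: it is just the definition of $\ket{W_\yv}$ with the sum over $\ev$ regrouped into $\ket{\psi_\sv}$. For the second identity you take a slightly different route: you substitute the first identity into the right-hand side and invoke character orthogonality $\sum_{\yv \in \Z_q^n}\omega^{\yv\cdot(\sv-\sv')} = q^n\,\one_{\sv=\sv'}$ directly in the ``position'' picture. The paper instead combines Propositions~\ref{Proposition:PsiFourier} and~\ref{Proposition:WFourier} to write $\ket{\widehat{\psi_\sv}} = \sum_{\yv}\omega^{\yv\cdot\sv}\frac{\sqrt{w_\yv}}{q^n}\ket{\widehat{W_\yv}}$ and then applies an inverse Quantum Fourier Transform to both sides. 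Mathematically these are the same Fourier inversion on $\Z_q^n$; your version is more self-contained (it does not rely on the two earlier propositions), while the paper's version reuses machinery it has already established and makes the role of the dual-lattice decomposition visible. Your closing remark about interpreting $\sqrt{w_\yv}\ket{W_\yv}$ as the zero vector when $w_\yv = 0$ is a sensible precaution that the paper leaves implicit.
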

\begin{proof}
	The first equality comes directly from the definitions of $\ket{W_\yv}$ and $\ket{\psi_\sv}$. For the second equality, the above two propositions immediately imply that 
	$$ \ket{\widehat{\psi_\sv}} = \sum_{\yv \in \Z_q^n} \omega^{\yv \cdot \sv} \frac{\sqrt{w_\yv}}{q^n}\ket{\widehat{W_\yv}},$$
	which gives the result by performing an inverse Quantum Fourier Transform on each side of the equality. 
\end{proof}

Finally, we give a proposition related to the norms $w_\yv$.
\begin{proposition}
	$\E_{\yv}\left[w_\yv\right] = q^n$.
\end{proposition}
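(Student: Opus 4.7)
The plan is to exploit the expansion of $\ket{\psi_\sv}$ in terms of the family $\{\ket{W_\yv}\}_{\yv\in\Z_q^n}$ given by the second equation of Proposition~\ref{Proposition:PsiToW}. Two simple ingredients are needed. First, $\ket{\psi_\sv}$ itself is a unit vector: the map $\ev \mapsto \Am\T\sv+\ev$ is a bijection on $\Z_q^m$, so the computational basis states appearing in the definition of $\ket{\psi_\sv}$ are all distinct, and $\norm{\ket{\psi_\sv}}^2 = \sum_\ev |f(\ev)|^2 = \norm{f}^2 = 1$. Second, the $\ket{W_\yv}$ form an orthonormal family — the orthogonality was already observed right after Proposition~\ref{Proposition:WFourier}, since the Fourier transforms $\ket{\widehat{W_\yv}}$ are supported on the disjoint sets $\Lambda^\bot_\yv(\Am)\cap\Z_q^m$, and each $\ket{W_\yv}$ is normalized by definition.

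With these two ingredients, I would square-norm both sides of the identity $\ket{\psi_\sv} = \frac{1}{q^n}\sum_{\yv} \omega^{\yv\cdot\sv}\sqrt{w_\yv}\,\ket{W_\yv}$. Orthonormality collapses the cross terms, and the phases satisfy $|\omega^{\yv\cdot\sv}|=1$, leaving $1 = q^{-2n}\sum_\yv w_\yv$. Rearranging yields $\sum_\yv w_\yv = q^{2n}$, so $\mathbb{E}_\yv[w_\yv] = q^n$.

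There is essentially no obstacle: the whole proof is one line once the orthogonality observation and the unit norm of $\ket{\psi_\sv}$ are on the table. As a redundant cross-check, one could instead deduce $w_\yv = q^{2n}\sum_{\xv\in\Lambda^\bot_\yv(\Am)}|\hf(\xv)|^2$ directly from Proposition~\ref{Proposition:WFourier} by imposing $\norm{\ket{\widehat{W_\yv}}}=1$, and then average over $\yv$ using the partition of $\Z_q^m$ into the $\Lambda^\bot_\yv(\Am)\cap\Z_q^m$ together with Parseval, $\sum_\xv |\hf(\xv)|^2 = \norm{f}^2 = 1$, reaching the same conclusion.
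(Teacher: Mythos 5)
Your argument is correct, but it takes a slightly different route from the paper's. The paper proves the identity entirely in the Fourier domain: it reads off $w_\yv = q^{2n}\sum_{\xv\in\Lambda^\bot_\yv(\Am)}|\hf(\xv)|^2$ from the normalization of $\ket{\widehat{W_\yv}}$ in Proposition~\ref{Proposition:WFourier}, then sums over $\yv$ using the partition of $\Z_q^m$ into the fibers $\Lambda^\bot_\yv(\Am)$ and Parseval ($\norm{\hf}=\norm{f}=1$). That is precisely your ``redundant cross-check.'' Your primary argument instead expands $\ket{\psi_\sv}$ in the orthonormal family $\{\ket{W_\yv}\}$ via Proposition~\ref{Proposition:PsiToW} and applies Pythagoras, using only that $\ket{\psi_\sv}$ is a unit vector and that the $\ket{W_\yv}$ are orthonormal --- facts already established in the text. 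This is marginally slicker in that it never needs the explicit formula for $w_\yv$, though the two computations are of course equivalent (Proposition~\ref{Proposition:PsiToW} is itself derived from Propositions~\ref{Proposition:PsiFourier} and~\ref{Proposition:WFourier}). The only point worth being careful about, in either proof, is the degenerate case $w_\yv=0$, where $\ket{W_\yv}$ is undefined; the convention that $\sqrt{w_\yv}\,\ket{W_\yv}$ denotes the zero vector handles it and affects neither argument.
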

\begin{proof}
	Because $\ket{\widehat{W_\yv}}$ is a unit vector, we have 
	$$ w_\yv = q^{2n} \sum_{\xv \in \Lambda^\bot_\yv(\Am)} |\hf(\xv)|^2.$$
	Since $\norm{f}_2 = \norm{\hf}_2 = 1$, we immediately have 
	$$ \frac{1}{q^n}\sum_{\yv \in \Z_q^n} w_\yv = q^{n} \sum_{\yv \in \Z_q^n} \sum_{\xv \in \Lambda^\bot_\yv(\Am)} |\hf(\xv)|^2 = q^{n} \sum_{\xv \in \Z_q^m} |\hf(\xv)|^2 = q^{n}.$$ 
\end{proof}

\subsection{Tractability bound for $\SLWE(\Am,f)$}
\begin{proposition}
	Let $q,m,n$ be positive integers. Let $f : \Z_q^m \rightarrow \mathbb{C}$ such that $\norm{f}_2 = 1$ and let $\Am \in \Z_q^n$.  For each $\yv \in \Z_q^n$, let
	$$ \ket{W_\yv} = \frac{1}{\sqrt{w_\yv}} \sum_{{\sv \in \Z_q^n }}\sum_{\ev \in \Z_q^m}  \omega^{- \yv \cdot \sv}f(\ev) \ket{\Am\T\sv + \ev},$$
	where $w_\yv$ is a normalizing factor so that $\ket{W_\yv}$ are unit vectors. The maximum probability $p_{max}$ that a (potentially unbounded) quantum algorithm has of solving $\SLWE(\Am,f)$ is 
	$$ p_{max} = \left(\E_{\yv \Unif \Z_q^n} \left[\sqrt{\frac{w_\yv}{q^n}}\right]\right)^2.$$
\end{proposition}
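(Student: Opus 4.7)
The plan is to view this as pure-state quantum hypothesis testing with uniform prior over the $q^n$ signal states $\{\ket{\psi_\sv}\}_{\sv \in \Z_q^n}$ and exploit the fact that the ensemble is group-covariant, which forces the optimum to be attained by the pretty-good (square-root) measurement. First I would introduce the shift unitaries $U_\sv$ on $\mathbb{C}^{q^m}$ defined by $U_\sv \ket{\xv} = \ket{\xv + \Am\T \sv}$; these form a unitary representation of $\Z_q^n$, they satisfy $\ket{\psi_\sv} = U_\sv \ket{\psi_0}$, and a direct calculation from the first identity of Proposition~\ref{Proposition:PsiToW} yields $U_\sv \ket{W_\yv} = \omega^{\yv \cdot \sv}\ket{W_\yv}$, so the orthonormal family $\{\ket{W_\yv}\}_\yv$ simultaneously diagonalises every $U_\sv$. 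The second identity of the same proposition expresses $\ket{\psi_\sv} = \frac{1}{q^n}\sum_\yv \omega^{\yv \cdot \sv}\sqrt{w_\yv}\ket{W_\yv}$, so all signal states live in the subspace $V := \mathrm{span}\{\ket{W_\yv}\}_\yv$ and any POVM may be restricted to $V$.

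Next I would reduce, by the standard symmetrisation trick, to POVMs that are themselves covariant. After observing that any quantum algorithm for $\SLWE$ can be collapsed into a single POVM with outcomes in $\Z_q^n$, define for any such POVM $\{M_\sv\}_\sv$ the averaged POVM $M'_\sv := \frac{1}{q^n}\sum_{\tv \in \Z_q^n} U_\tv M_{\sv - \tv} U_\tv^\dagger$. One checks that $\{M'_\sv\}_\sv$ is a POVM, that it is covariant ($U_\rv M'_\sv U_\rv^\dagger = M'_{\sv + \rv}$), and that the identity $U_\tv^\dagger \ket{\psi_\sv} = \ket{\psi_{\sv - \tv}}$ makes its success probability equal to that of the original $\{M_\sv\}_\sv$. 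Hence I may assume $M_\sv = U_\sv M_0 U_\sv^\dagger$ for a single operator $M_0 \ge 0$. Expanding $M_0 = \sum_{\yv, \yv'} a_{\yv, \yv'} \ket{W_\yv}\bra{W_{\yv'}}$ in the eigenbasis, the completeness constraint $\sum_\sv U_\sv M_0 U_\sv^\dagger = I_V$ reduces, using $\sum_\sv \omega^{(\yv - \yv')\cdot \sv} = q^n \delta_{\yv, \yv'}$, to $a_{\yv, \yv} = 1/q^n$ for every $\yv$.

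Finally I would bound the objective. The success probability rewrites as
\begin{equation*}
  p \;=\; \frac{1}{q^n}\sum_\sv \bra{\psi_\sv} M_\sv \ket{\psi_\sv} \;=\; \bra{\psi_0} M_0 \ket{\psi_0} \;=\; \frac{1}{q^{2n}}\sum_{\yv, \yv'} a_{\yv, \yv'}\sqrt{w_\yv w_{\yv'}}.
\end{equation*}
Positive semidefiniteness of $M_0$ forces $|a_{\yv, \yv'}| \le \sqrt{a_{\yv, \yv} a_{\yv', \yv'}} = 1/q^n$ by Cauchy--Schwarz, so $p \le \frac{1}{q^{3n}}\bigl(\sum_\yv \sqrt{w_\yv}\bigr)^2 = \bigl(\E_{\yv \Unif \Z_q^n} \sqrt{w_\yv/q^n}\bigr)^2$. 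The bound is saturated by the rank-one choice $M_0 = \frac{1}{q^n}\ket{v}\bra{v}$ with $\ket{v} = \sum_\yv \ket{W_\yv}$, which is exactly the square-root measurement, so the stated value is achieved. The main obstacle is the symmetrisation step; once it is in place the rest reduces to a Cauchy--Schwarz computation. A minor nuisance, easily handled by restricting attention to the support of $\rho$, is that some $w_\yv$ may vanish.
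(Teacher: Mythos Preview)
Your proof is correct and ultimately identifies the same optimal measurement as the paper, but the route is somewhat different. The paper's argument is shorter and more black-box: it writes down the Pretty Good Measurement $M_\sv = \rho^{-1/2}\kb{\psi_\sv}\rho^{-1/2}$, diagonalises $\rho$ in the $\ket{W_\yv}$ basis to get $M_\sv = \kb{Y_\sv}$ with $\ket{Y_\sv} = \frac{1}{\sqrt{q^n}}\sum_\yv \omega^{\yv\cdot\sv}\ket{W_\yv}$, computes the resulting success probability directly, and then simply \emph{cites}~\cite{BKMH97} for the fact that the PGM is optimal for geometrically uniform state ensembles. You instead prove optimality from first principles: the symmetrisation step reduces to covariant POVMs $M_\sv = U_\sv M_0 U_\sv^\dagger$, the completeness relation pins the diagonal of $M_0$ in the $\ket{W_\yv}$ basis, and Cauchy--Schwarz on the off-diagonals yields the upper bound, which your rank-one $M_0$ then saturates. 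Your approach is more self-contained and in effect reproves (a special case of) the cited result; the paper's approach is terser but relies on an external optimality theorem. Both land on the same $\left(\E_\yv\sqrt{w_\yv/q^n}\right)^2$.
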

\begin{proof}
	This is proven by analyzing the Pretty Good Measurement on the states $\ket{\psi_\sv}$, which turns out to be optimal for this family of states and then reusing the analysis of~\cite{CT24}. For completeness, we formally prove this proposition in \Cref{Appendix:Tractability}.
\end{proof}
This proposition explains why the term $\E_{\yv \Unif \Z_q^n} \left[\sqrt{\frac{w_\yv}{q^n}}\right]$ appears in some of our reductions. 
\section{The forward direction $\ISIS \rightarrow \SLWE$}
Our first result is to prove the general forward reduction $\ISIS \rightarrow \SLWE$.
\begin{theorem}\label{Theorem:1}
	Let $q,m,n$ be positive integers, let $\Am \in \Z_q^{n \times m}$ and let $f : \Z_q^n \rightarrow \mathbb{C}$ with $\norm{f}_2 = 1$. Let $T \subseteq \Z_q^m$. Assume that 
	\begin{itemize}
		\item There exists a quantum algorithm that solves $\SLWE(\Am,f)$ in time {$\Time_{\SLWE}$} and succeeds with probability $p$.
		\item $\sum_{\xv \in T} |\hf(\xv)|^2 = 1 - \eta$.
		\item $f$ is quantum samplable in time $\Time_{Sampl}$ {\ie} there is a quantum algorithm running in time $\Time_{Sampl}$ that constructs the state 
		$ \sum_{\ev \in \Z_q^m} f(\ev)\ket{\ev}.$
	\end{itemize}
	Then there exists a quantum algorithm that solves $\ISIS(\Am,T)$, that succeeds with probability at least $p(1-\eta) - 2\sqrt{p(1-p)\eta}$ and runs in time 
	$$\Time_{\ISIS} = O\left(\frac{1}{p}\left(\Time_{\SLWE} + \Time_{Sampl}\right) + \poly(m,\log(q))\right).$$
\end{theorem}
The remainder of this section is devoted to the proof of this theorem.

\subsection{Characterization of quantum algorithms for $\SLWE$}

A quantum algorithm for $\SLWE(\Am,f)$ can be described by a unitary $U$ (that depends on $\Am$ and $f$) such that for all $\sv \in \Z_q^n$,
$$ \ U \ket{\psi_\sv}\ket{0} = \sum_{\sv' \in \Z_q^n} \gamma_{\sv,\sv'} \ket{\sv'}\ket{\wpsi_{\sv,\sv'}}, \text{ for some unit vectors } \ket{\wpsi_{\sv,\sv'}} \text{ and } \gamma_{\sv,\sv'} \in \mathbb{C},$$
and the result is obtained by measuring the first register. 
The success probability of this algorithm for each $\sv$ is $p_\sv =|\gamma_{\sv,\sv}|^2$ and the overall success probability is $p = \frac{1}{q^n} \sum_{\sv} |\gamma_{\sv,\sv}|^2$. We first prove that any such quantum algorithm can be symmetrized in the sense that each $\gamma_{\sv,\sv}$ is equal to $\sqrt{p}$.
\begin{proposition}
	Let $\aa$ be an efficient quantum algorithm for $\SLWE(\Am,f)$ that succeeds with probability $p$. There exists an efficiently computable unitary $U$ such that for all $\sv \in \Z_q^n$,
	$$ U \ket{\psi_\sv}\ket{0} = \sum_{\sv' \in \Z_q^n} \gamma'_{\sv,\sv'} \ket{\sv'}\ket{\wpsi''_{\sv,\sv'}}, \text{ for some unit vectors } \ket{\wpsi''_{\sv,\sv'}} \text{and each } \gamma'_{\sv,\sv} = \sqrt{p}.$$
\end{proposition}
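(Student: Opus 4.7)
The plan is to use a standard input-randomisation (shift) trick, with the randomness kept coherently in a register so that the whole thing remains a unitary. The key observation is that the family $\{\ket{\psi_\sv}\}_\sv$ is invariant (up to relabelling) under the shift $X_{\Am\T \sv_0}\colon \ket{\ev}\mapsto \ket{\ev+\Am\T\sv_0}$, since
$$X_{\Am\T\sv_0}\ket{\psi_\sv} \;=\; \sum_{\ev} f(\ev)\ket{\Am\T(\sv+\sv_0)+\ev} \;=\; \ket{\psi_{\sv+\sv_0}}.$$
Because $\Am\T\sv_0$ is classically computable from $\sv_0$, the controlled-shift unitary $C\text{-}X : \ket{\sv_0}\ket{\ev} \mapsto \ket{\sv_0}\ket{\ev+\Am\T\sv_0}$ is efficient.

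First, I would define the symmetrised unitary $U$ as follows. Start from $\ket{\psi_\sv}\ket{0}_{\text{out}}\ket{0}_{\text{anc}}\ket{0}_R$, where $R$ is a fresh $\Z_q^n$-register. Apply a $\mathrm{QFT}$ (or a uniform superposition preparation) on $R$ to obtain $\frac{1}{\sqrt{q^n}}\sum_{\sv_0}\ket{\sv_0}_R$. Apply the controlled shift on the first register to reach
$$\frac{1}{\sqrt{q^n}}\sum_{\sv_0}\ket{\psi_{\sv+\sv_0}}\ket{0}_{\text{out}}\ket{0}_{\text{anc}}\ket{\sv_0}_R.$$
Now apply the unitary $U_\aa$ of the assumed algorithm on the first two registers and its ancilla, then subtract $\sv_0$ (controlled by $R$) from the output register. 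Reindexing $\sv'' = \sv' - \sv_0$, one obtains
$$U\ket{\psi_\sv}\ket{0} \;=\; \frac{1}{\sqrt{q^n}}\sum_{\sv_0,\sv''} \gamma_{\sv+\sv_0,\,\sv''+\sv_0}\;\ket{\sv''}_{\text{out}}\otimes \ket{\wpsi_{\sv+\sv_0,\sv''+\sv_0}}\otimes\ket{\sv_0}_R .$$

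Second, I would compute the probability of measuring $\sv$ on the output register for input $\sv$. Since the different $\sv_0$ contributions live in orthogonal subspaces of the $R$-register, they add incoherently, giving
$$\Pr[\text{output}=\sv \mid \text{input } \sv] \;=\; \frac{1}{q^n}\sum_{\sv_0} |\gamma_{\sv+\sv_0,\sv+\sv_0}|^2 \;=\; \frac{1}{q^n}\sum_{\sv''} p_{\sv''} \;=\; p,$$
independently of $\sv$. To re-express $U\ket{\psi_\sv}\ket{0}$ in the required form, collect for each $\sv'$ the unnormalised vector $\ket{\phi_{\sv,\sv'}}$ living on the remaining registers (ancilla $+ R$); this yields $U\ket{\psi_\sv}\ket{0}=\sum_{\sv'} \ket{\sv'}\otimes\ket{\phi_{\sv,\sv'}}$. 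Setting $\gamma'_{\sv,\sv'} \eqdef \|\ket{\phi_{\sv,\sv'}}\|$ and $\ket{\wpsi''_{\sv,\sv'}} \eqdef \ket{\phi_{\sv,\sv'}}/\gamma'_{\sv,\sv'}$ (with an arbitrary unit vector when $\gamma'=0$) puts $U$ in the claimed form with real non-negative diagonal, and the previous probability calculation gives exactly $\gamma'_{\sv,\sv}=\sqrt{p}$.

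I do not expect a real obstacle here: the argument is a textbook coherent randomisation and the only thing one has to check is that the shift $X_{\Am\T\sv_0}$ indeed permutes the $\ket{\psi_\sv}$ consistently, which is immediate from the definition. The only mildly delicate point is accounting correctly for the $R$-register so that the different $\sv_0$'s contribute orthogonally to the output probability; this is what forces us to leave $\ket{\sv_0}$ untouched (rather than try to uncompute it) at the end of the construction.
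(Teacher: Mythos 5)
Your proposal is correct and follows essentially the same route as the paper: a coherent shift-randomisation using $\ket{\psi_{\sv+\sv_0}} = S_{\Am\T\sv_0}\ket{\psi_\sv}$ with the shift value kept in an untouched register, followed by subtracting the shift from the output register and averaging the diagonal success amplitudes to get $p$ independently of $\sv$. Your final step of defining $\gamma'_{\sv,\sv'}$ as the norm of the residual vector is a slightly cleaner way of achieving what the paper does by absorbing phases into the second register, but it is the same argument.
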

\begin{proof}
The idea is to use the symmetries inherent to the states $\ket{\psi_\sv}$. We present a first algorithm that succeeds with probability $p$ for each input state $\ket{\psi_\sv}$. Consider the shift unitaries $S_\zv : \ket{\xv} \rightarrow \ket{\xv + \zv}$ for $\xv,\zv \in \Z_q^m$ which are efficiently computable. Notice that $\forall \sv,\tv \in \Z_q^n$, we have $\ket{\psi_\tv} = S_{\Am\T (\tv - \sv)} \ket{\psi_\sv}$. We consider the following algorithm 
\begin{enumerate}
	\item Given input $\ket{\psi_\sv}$, construct  
	$$ \ket{\Omega_1} = \frac{1}{\sqrt{q^n}}\sum_{\tv \in \Z_q^n}S_{\Am\T \tv}\ket{\psi_\sv}\ket{0}\ket{\tv} = \sum_{\tv \in \Z_q^n}\ket{\psi_{\sv + \tv}}\ket{0}\ket{\tv}.$$
	\item  Apply $U$ on the first two register to obtain 
	$$ \ket{\Omega_2} = \frac{1}{\sqrt{q^n}}\sum_{\tv \in \Z_q^n} \sum_{\sv' \in \Z_q^n} \gamma_{\sv+\tv,\sv'} \ket{\sv'} \ket{\wpsi_{\sv + \tv,\sv'}}\ket{\tv}.$$
	\item We subtract the value from the third register in the first register to obtain
	\begin{align*}\ket{\Omega_3} & = \frac{1}{\sqrt{q^n}}\sum_{\tv \in \Z_q^n} \sum_{\sv' \in \Z_q^n} \gamma_{\sv+\tv,\sv'} \ket{\sv' - \tv} \ket{\wpsi_{\sv + \tv,\sv'}}\ket{\tv} \\
		& = \frac{1}{\sqrt{q^n}}\sum_{\sv' \in \Z_q^n}\sum_{\tv \in \Z_q^n} \gamma_{\sv + \tv,\sv'+\tv}\ket{\sv'} \ket{\wpsi_{\sv + \tv,\sv' + \tv}}\ket{\tv} \\
	\end{align*}
\end{enumerate}
If we measure the first register, we obtain $\sv$ with probability $\frac{1}{q^n} \sum_{\tv} |\gamma_{\sv + \tv,\sv + \tv}|^2 = p$ which is independent of $\sv$. If we perform the above algorithm fully coherently, we obtain a quantum unitary $U'$ such that for all $\sv \in \Z_q^n$,
$$U' \ket{\psi_\sv}\ket{0} = \sum_{\sv' \in \Z_q^n} \gamma'_{\sv,\sv'} \ket{\sv'}\ket{\wpsi'_{\sv,\sv'}}$$%
for some unit vectors $\ket{\wpsi_{\sv,\sv'}}$ and each $|\gamma'_{\sv,\sv}| = \sqrt{p}$.
In order to conclude, we just have to put the potential phases of $\gamma'_{\sv,\sv}$ into the second register so if we define $\ket{\wpsi''_{\sv,\sv'}} = \frac{\gamma_{\sv,\sv'}}{|\gamma_{\sv,\sv'}|}\ket{\wpsi'_{\sv,\sv'}}$, we can indeed write 
$$  U' \ket{\psi_\sv}\ket{0} = \sum_{\sv' \in \Z_q^n} \gamma'_{\sv,\sv'} \ket{\sv'}\ket{\wpsi''_{\sv,\sv'}} \quad \text{with each } \gamma'_{\sv,\sv'} = \sqrt{p}.$$
\end{proof}
\subsection{The main algorithm}\label{Section:Algorithm}
\subsubsection{Presentation of the main algorithm}

We present now a detailed description of our algorithm. We slightly modify the way it is presented in the literature in order to make the proofs easier.

	\begin{algobox}{Quantum algorithm based on Regev's reduction for \ICC}{reduction}
	\textbf{Input:} We start from a matrix $\Am \in \Z_q^{n \times m}$. Let $T \subseteq \Z^m$ and $f : \Z_q^m \rightarrow \mathbb{C}$ such that $\norm{f}_2 = 1$. For each $\sv \in \Z_q^n$, we write $\ket{\psi_{\sv}} = \sum_{\ev \in \Z_q^m} f(\ev)\ket{\Am\T \sv + \ev}$.  Assume we have an efficiently computable quantum unitary 
	$$ U \ket{\psi_\sv}\ket{0} = \sum_{\sv' \in \Z_q^n} \gamma_{\sv,\sv'} \ket{\wpsi_{\sv,\sv'}}\ket{\sv'}$$
	where $\forall \sv \in \Z_q^n, \ \gamma_{\sv,\sv} = \sqrt{\Pdec}$ and each $\norm{\ket{\wpsi_{\sv,\sv'}}} = 1$.
	Finally, we are given a random $\yv \in \Z_q^n$. \\
	\textbf{Goal:} Find $\xv \in \Lambda^{\bot}_{\yv}(\Am) \cap T$, where $\C^\bot_{\yv}(\Am) = \{\xv \in \Z^m :  {\Am}\xv = \yv \mod q\}$ \\ \\
	\textbf{Execution of the algorithm:}
	\begin{enumerate}
		\item First construct the state
		$\frac{1}{\sqrt{q^n}} \sum_{\sv \in \Z_q^n} \omega^{-\yv \cdot \sv} \ket{\psi_\sv}\ket{0}\ket{\sv}$ (See \Cref{Section:FirstAnalysis}).
		\item Perform the operation
		\begin{align*}
			\frac{1}{\sqrt{q^n}} \sum_{\sv \in \Z_q^n} \omega^{-\yv \cdot \sv} \ket{\psi_{\sv}}\ket{0}\ket{\sv} & \overset{\textcircled{\scalebox{0.7}{A}}}{\mathlarger{\mathlarger{\rightarrow}}} \frac{1}{\sqrt{q^n}}\sum_{\sv,\sv' \in \Z_q^n} \omega^{-\yv \cdot \sv} \gamma_{\sv,\sv'} \ket{\wpsi_{\sv,\sv'}}\ket{\sv'}\ket{\sv - \sv'}
		\end{align*}
		Here, $\textcircled{\scalebox{0.7}{A}}$ is done by applying $U$ on the first two registers and then subtracting the value of the second register in the third register. 
		\item Measure the third register. If we do not obtain $\zerov$, start again from step $1$. Otherwise, we obtain the state $\frac{1}{\sqrt{q^n}}\sum_{\sv \in \Z_q^n} \omega^{-\yv \cdot \sv} \ket{\wpsi_{\sv,\sv}}\ket{\sv}\ket{\zerov}$.
		\item Discard the third register and apply $U^\dagger$ on the first two registers. The resulting state is
		$$\ket{\Phi_{\yv}} = \frac{1}{\sqrt{q^n}}\sum_{\sv \in \Z_q^n} \omega^{-\yv \cdot \sv}\sqrt{\Pdec}\ket{\psi_\sv}\ket{\zerov} + \omega^{-\yv \cdot \sv} \sqrt{1 - \Pdec} \ket{Z_\sv}$$
		for  unit vectors $\ket{Z_\sv} \bot \ket{\psi_\sv}\ket{\zerov}$.
		\item We apply a Quantum Fourier Transform on the first register and measure in the computational basis. Output the outcome of the measurement.
	\end{enumerate}
\end{algobox}

\subsubsection{First analysis and running time of the algorithm}\label{Section:FirstAnalysis}
We first provide some details over each step of the algorithm and on the running time. Let 
\begin{itemize}
	\item $\Time_{\SLWE}$ be the running time to compute $U$.
	\item $\Time_{Sampl}$ be the running time to quantum sample $f$, {\ie} to construct the state $\sum_{\ev \in \Z_q^m} f(\ev) \ket{\ev}$. 
\end{itemize}
\begin{enumerate}
	\item The initialization step of the algorithm can be done as follows
	$$ \sum_{\sv \in \Z_q^n} \ket{\sv} \otimes \sum_{\ev \in \Z_q^m} f(\ev) \ket{\ev} \overset{\textcircled{\scalebox{0.7}{1}}}{\mathlarger{\mathlarger{\rightarrow}}}  \sum_{\substack{\sv \in \Z_q^n \\ \ev \in \Z_q^m}} f(\ev)\ket{\sv}\ket{\Am\T \sv + \ev} = \sum_{\sv \in \Z_q^n} \ket{\sv}\ket{\psi_{\sv}},$$
	which corresponds to the initial state by adding a $\ket{\zerov}$ register and reordering. In $\textcircled{\scalebox{0.7}{1}}$, we use the fact that $\sv \rightarrow \Am\T\sv$ is easily computable and apply this operation coherently. We then need to compute $\sum_{\ev \in \Z_q^m} f(\ev) \ket{\ev}$ which takes some time $\Time_{Sampl}$. In practice, $f$ is chosen such that this state can be computed efficiently. The running time of this is therefore in $O(\Time_{Sampl} + \poly(m,\log(q)))$.
	\item In step $3$, before the measurement, we have the state 
	\begin{align*}
		\frac{1}{\sqrt{q^n}} 	& \sum_{\sv,\sv' \in \Z_q^n} \omega^{-\yv \cdot \sv} \gamma_{\sv,\sv'} \ket{\wpsi_{\sv,\sv'}} \ket{\sv'}\ket{\sv - \sv'} = \\ 
		~											& \frac{1}{\sqrt{q^n}}\left(\sum_{\sv}  \omega^{-\yv \cdot \sv} \sqrt{\Pdec}  \ket{\wpsi_{\sv,\sv}}\ket{\sv}\ket{\zerov} + \sum_{\sv,\sv' \neq \sv}  \omega^{-\yv \cdot \sv} \sqrt{1-\Pdec}  \ket{\wpsi_{\sv,\sv'}}\ket{\sv'}\ket{\sv - \sv'}\right).
	\end{align*}
	which means that we successfully measure $\zerov$ in the last register with probability $\Pdec$ and that conditioned on this outcome, the resulting state is the following: $\frac{1}{\sqrt{q^n}}\sum_{\sv \in \Z_q^n}  \omega^{-\yv \cdot \sv}   \ket{\wpsi_{\sv,\sv}}\ket{\sv}\ket{0}$. We therefore have to repeat steps $1$ to $3$ $O(\frac{1}{\Pdec})$ times. Moreover, step $2$ requires to compute $U$, which takes time $\Time_{U}$ which is the running time of the $\SLWE$ algorithm. From there, we conclude that the time required for this algorithm to successfully pass step $3$ is 
	$$O\left(\frac{1}{\Pdec} \left(\Time_{\SLWE} + \Time_{Sampl} + \poly(m,\log(q))\right)\right).$$ 
	\item In order {to obtain $\ket{\phi_\yv}$ in the end of step $4$}, we start from $U \ket{\psi_\sv}\ket{0} = \sum_{\sv' \in \Z_q^n} \gamma_{\sv,\sv'} \ket{\wpsi_{\sv,\sv'}}\ket{\sv'}$ which implies 
  $$ \bra{\psi_{\sv}}\bra{\zerov} \cdot  U^\dagger  \left(\ket{\wpsi_{\sv,\sv}}\ket{\sv}\right) = \bra{\wpsi_{\sv,\sv}}\bra{\sv} \cdot U \left(\ket{\psi_{\sv}}\ket{\zerov}\right) = \gamma_{\sv,\sv} = \sqrt{\Pdec}.$$
	This means that for each $\sv \in \Z_q^n$,  we can indeed write 
	$$ U^\dagger(\ket{\wpsi_{\sv,\sv}}\ket{\sv}) = \sqrt{\Pdec} \ket{\psi_{\sv}}\ket{0} + \sqrt{1 - \Pdec} \ket{Z_\sv},$$
	for some unit vector $\ket{Z_{\sv}}$ orthogonal to $\ket{\psi_{\sv}}\ket{0}$, which justifies step $4$ of the algorithm. Finally, in step $5$, we have to perform $m$ quantum Fourier transforms in $\Z_q$ and measure, which takes time $\poly(m,\log(q))$.
\end{enumerate}
From this analysis, we can conclude that the total running time of the algorithm satisfies 
$$\Time_{\ISIS} = O\left(\frac{1}{\Pdec} \left(\Time_{\SLWE} + \Time_{Sampl}+ \poly(m,\log(q))\right) \right).$$
In particular, if $\Time_{\SLWE},\Time_{Sampl},\frac{1}{p} = \poly(m,\log(q))$ then $\Time_{\ISIS} = \poly(m,\log(q))$.
The trickier part will be to argue about the success probability of the algorithm, which is the goal of the following section.

\subsection{Proof of the main theorem}
\begin{proof}[of Theorem~\ref{Theorem:1}]
	We start from a quantum algorithm for $\SLWE(\Am,f)$ and we consider the algorithm described in \Cref{Section:Algorithm}.
	The running time of the algorithm has been discussed in the previous section so we just need to prove the success probability.
	We fix $\yv \in \Z_q^{n}$, and let $p'_{\yv}$ be the probability that the algorithm outputs an element $\xv \in \Lambda^\bot_{\yv}(\Am) \cap T$ given this $\yv$.
	
	Consider the state $\ket{Z_\sv}$ defined in step $4$ of the Algorithm. We write 
	$$ \ket{Z_\sv} = \ket{Z^0_\sv}\ket{\zerov} + \sum_{\vv \neq \zerov} \ket{Z^\vv_\sv}\ket{\vv},$$
	with in particular $\norm{\ket{Z^0_\sv}} \le 1$. Recall that $\ket{Z_\sv}$ is orthogonal to $\ket{\psi_\sv}\ket{\zerov}$ which implies that $\ket{Z^0_\sv}$ is orthogonal to $\ket{\psi_\sv}$. At step $5$ of the algorithm before the final measurement, we  have the state 
	$$ \ket{\Omega^\yv} = \frac{1}{\sqrt{q^n}} \sum_{\sv \in \Z_q^n} \left(\omega^{-\yv \cdot \sv} \sqrt{p}\ket{\widehat{\psi_\sv}}\ket{\zerov} + \omega^{-\yv \cdot \sv}\sqrt{1-p}\ket{\widehat{Z^0_\sv}}\ket{\zerov} + \sum_{\vv \neq \zerov}\ket{\widehat{Z_\sv^{\vv}}}\ket{\vv}\right)$$
	Now, let us define 
	$$ \ket{\Xi_0^\yv} \eqdef \frac{1}{\sqrt{q^n}} \sum_{\sv \in \Z_q^n} \left(\omega^{-\yv \cdot \sv} \sqrt{p}\ket{\widehat{\psi_\sv}}+ \omega^{-\yv \cdot \sv}\sqrt{1-p}\ket{\widehat{Z^0_\sv}}\right),$$
	so that $\ket{\Omega^\yv} = \ket{\Xi_0^\yv} \ket{\zerov} + \sum_{\vv \neq \zerov}\ket{\widehat{Z_\sv^{\vv}}}\ket{\vv}.
	$
	Since we only measure the first register and we succeed when we have an element of $\Lambda^\bot_\yv(\Am) \cap T$, we have
	$
		p'_\yv \ge \sum_{\xv \in \Lambda^\bot_\yv(\Am) \cap T} |\braket{\xv}{\Xi_\zerov^\yv}|^2.
	$
	  We now define the $\{z_{\sv,\xv}\}$ such that $\ket{\widehat{Z^0_\sv}} = \sum_{\xv \in \Z_q^m} z_{\sv,\xv} \ket{\xv}$,  and have the following lemma.

	\begin{lemma} For each $\yv \in \Z_q^n$, 
		$$p'_{\yv} \ge \sum_{\xv \in \Lambda^\bot_{\yv}(\Am) \cap T} \left|\sqrt{\Pdec} \sqrt{q^n} \hf(\xv) + \frac{\sqrt{1 - \Pdec}}{\sqrt{q^n}} \sum_{\sv \in \Z_q^n} \omega^{- \sv \cdot \yv} z_{\sv,\xv} \right|^2.$$
	\end{lemma}
	\begin{proof}
		In order to compute $p'_{\yv}$, we have to compute $\ket{\Xi_0}$.
		Using \Cref{Proposition:PsiFourier}, {and the identity $\sum_{\yv' \in \Z_q^n} \omega^{(\yv' -\yv) \cdot \sv} = 0$ if $\yv' \neq \yv$, and $q^n$ otherwise,} we have
		$$ \sum_{\sv \in \Z_q^n} \omega^{-\yv \cdot \sv} \ket{\widehat{\psi_{\sv}}} = \sum_{\sv \in \Z_q^n} \sum_{\yv' \in \Z_q^n} \omega^{(\yv' -\yv) \cdot \sv}  \sum_{\xv \in \Lambda^{\bot}_{\yv'}(\Am)}  \hf(\xv) \ket{\xv} = q^n \sum_{\xv \in \C^\bot_{\yv}(\Am)} \hf(\xv) \ket{\xv}.$$
		From there, we have 
		$$ \ket{\Xi_0} = \sqrt{q^n p} \sum_{\xv \in \C^\bot_{\yv}(\Am)} \hf(\xv)\ket{\xv} + \sqrt{\frac{1-\Pdec}{q^n}}\sum_{\xv \in \Z_q^m} \sum_{\sv \in \Z_q^n} \omega^{-\yv \cdot \sv} z_{\sv,\xv}\ket{\xv}. $$	
		The algorithm computes this state and measures in the computational basis. The probability to output an element of $\C^\bot_{\yv}(\Am) \cap T$ is therefore 
		$$ p'_{\yv} \ge \sum_{\xv \in \Lambda^\bot_\yv(\Am) \cap T} |\braket{\xv}{\Xi_\zerov^\yv}|^2 =  \sum_{\xv \in \C^\bot_{\yv}(\Am) \cap T} \left|\sqrt{q^n{\Pdec}}\hf(\xv) + \sqrt{\frac{1-\Pdec}{q^n}}\sum_{\sv \in \Z_q^n} \omega^{-\yv \cdot \sv} z_{\sv,\xv}\right|^2. $$
	\end{proof}
	We can now go continue the proof of our theorem. We write 
		\begin{align*}
			p'_{\yv} & \ge \sum_{\xv \in \C^\bot_{\yv}(\Am) \cap T} \left|\sqrt{\Pdec} \sqrt{q^n} \hf(\xv) + \frac{\sqrt{1 - \Pdec}}{\sqrt{q^n}} \sum_{\sv \in \Z_q^n} \omega^{-\yv \cdot \sv} z_{\sv,\xv} \right|^2 \\
			& = \sum_{\xv \in \C^\bot_{\yv}(\Am) \cap T} \left[{\Pdec}{q^n} |\hf(\xv)|^2 + \frac{{1 - \Pdec}}{{q^n}} \left|\sum_{\sv \in \Z_q^n} \omega^{-\yv \cdot \sv} \hzsy \right|^2\right.\\
			&  + \left.2Re\left(\sqrt{\Pdec} \sqrt{q^n} \hf(\xv)\frac{\sqrt{1 - \Pdec}}{\sqrt{q^n}} \sum_{\sv \in \Z_q^n} \omega^{\yv \cdot \sv} \ohzsy\right)\right]
		\end{align*}
		where we used $|a+b|^2 = (a+b)(\overline{a} + \overline{b}) = |a|^2 + |b|^2 + 2Re(a\overline{b})$. We now bound each term separately. We first write 
		\begin{align*}
			\E_{\yv \Unif \Z_q^n} \left[\sum_{\xv \in \C^\bot_{\yv}(\Am) \cap T} \left({\Pdec}{q^n} |\hf(\xv)|^2\right)\right] & = \Pdec (1-\eta) \\
			\forall \yv \in \Z_q^n, \ \sum_{\xv \in \C^\bot_{\yv}(\Am) \cap T} \left(\frac{{1 - \Pdec}}{{q^n}} \left|\sum_{\sv \in \Z_q^n} \omega^{-\yv \cdot \sv}\hzsy \right|^2\right) & \ge 0 \\
			\sum_{\xv \in \C^\bot_{\yv}(\Am) \cap T} \left(2Re\left(\sqrt{\Pdec} \sqrt{q^n} \hf(\xv)\frac{\sqrt{1 - \Pdec}}{\sqrt{q^n}} \sum_{\sv \in \Z_q^n} \omega^{\yv \cdot \sv}\ohzsy\right)\right) &=\\
			2\sqrt{\Pdec(1-\Pdec)} Re\left(\sum_{\xv \in \C^\bot_{\yv}(\Am) \cap T, \sv \in \Z_q^n} \hf(\xv)\omega^{\yv \cdot \sv}\ohzsy\right)
		\end{align*}

		From there, we write 
		\begin{align*} \E_{\yv \Unif \Z_q^n} \left[p'_{\yv}\right] & \ge \Pdec(1-\eta) + 2\sqrt{\Pdec(1-\Pdec)} \E_{\yv \Unif \Z_q^n} \left[Re\left(\sum_{\substack{\xv \in \C^\bot_{\uv}(\Am) \cap T \\ \sv \in \Z_q^n}} \hf(\xv)\omega^{\yv \cdot \sv} \ohzsy\right)\right] \end{align*}
		In order to conclude, prove the following lemma
		\begin{lemma}
			$$ \forall \sv \in \Z_q^n, \ \left|\sum_{\yv \in \Z_q^n}\sum_{{\xv \in \C^\bot_{\yv}(\Am) \cap T }} \hf(\xv)\omega^{\yv \cdot \sv}\ohzsy\right| \le \sqrt{\eta}.$$
		\end{lemma}
		\begin{proof}
		We start from the equality $\braket{Z^0_\sv}{\psi_\sv} = \braket{\widehat{Z^0_\sv}}{\widehat{\psi_\sv}} = 0$ for each $\sv \in \Z_q^n$, which can be rewritten
			\begin{align*}
			\forall \sv \in \Z_q^n, \ \sum_{\yv  \in  \Z_q^n} \sum_{\xv \in \C^\bot_{\yv}(\Am)} \omega^{\yv \cdot \sv} \hf(\yv)\ohzsy = 0 
			\end{align*}
			This implies that for each $\sv \in \Z_q^n$,
			\begin{align*}
				\left|\sum_{\yv \in \Z_q^n}\sum_{{\xv \in \C^\bot_{\yv}(\Am) \cap T }} \hf(\xv)\omega^{\yv \cdot \sv}\ohzsy\right| & = \left|\sum_{\yv \in \Z_q^n}\sum_{{\xv \in \C^\bot_{\yv}(\Am) \cap \overline{T} }} \hf(\xv)\omega^{\yv \cdot \sv}\ohzsy\right| \\
				& =  \left|\sum_{{\xv \in  \overline{T} }} \hf(\xv)\omega^{\Am\T \xv \cdot \sv}\ohzsy\right| \\
				& \le \sqrt{\sum_{\xv \in  \overline{T} } |\hf(\xv)\omega^{\Am\T \xv \cdot \sv}|^2}\sqrt{\sum_{\yv \in \overline{T}}\ohzsy} \\
				& \le \sqrt{\eta}\sqrt{1} = \sqrt{\eta},
			\end{align*}
			where we used the fact that the $\ket{\widehat{Z^0_\sv}}$ have norm at most $1$.
\end{proof}
We can now conclude our main proof. We have
\begin{align*}
  \E_{\yv \Unif \Z_q^n} \left[Re\left(\sum_{\substack{\xv \in \C^\bot_{\uv}(\Am) \cap T \\ \sv \in \Z_q^n}} \hf(\xv)\omega^{\yv \cdot \sv} \ohzsy\right)\right] & = \frac{1}{q^n} \sum_{\sv \in \Z_q^n} Re\left(\sum_{\yv \in \Z_q^n}\sum_{\xv \in \Lambda^\bot_\yv(\Am)} \hf(\xv)\omega^{\yv \cdot \sv}\ohzsy\right) \\
	& \ge - \frac{1}{q^n} \sum_{\sv \in \Z_q^n} \left|\sum_{\yv \in \Z_q^n}\sum_{\xv \in \Lambda^\bot_\yv(\Am)} \hf(\xv)\omega^{\yv \cdot \sv}\ohzsy\right| \\
	& \ge - \frac{1}{q^n} \sum_{\sv \in \Z_q^n} \sqrt{\eta} \\
	& = - \sqrt{\eta}
\end{align*}
{Plugging this lower bound in the expression above, we obtain
$$
  \E_{\yv \Unif \Z_q^n} \left[p'_{\yv}\right] \ge \Pdec(1-\eta) - 2\sqrt{\Pdec(1-\Pdec)\eta}
$$
which concludes the proof.}
\end{proof}

We present the instantiation with Gaussian functions (Proposition~\ref{Proposition:Gaussians}) in Appendix~\ref{Appendix:Gaussians}

\COMMENT{\subsection{Gaussian distributions}

For a fixed $q$, we define the discrete Gaussian distribution $\chi_\sigma$ with parameter $\sigma$ on $\Z_q$ as follows
$$\chi_\sigma(x) = \frac{\sum_{k \in \mathbb{Z}} e^{-\pi((x + kq)^2/\sigma^2)}}{\sum_{x \in \Z_q}\sum_{k \in \mathbb{Z}} e^{-\pi((x + kq)^2/\sigma^2)}} , \quad \text{for } x \in \Z_q.$$

We have the following claims
\begin{claim}
	We have 
	$ \widehat{\chi_\sigma}(x) \eqdef \chi_{q/\sigma}(x) + e^{-\Theta(q^2/\sigma^2)},$
	where the Fourier transform is on $\Z_q$.
\end{claim}
\begin{claim}
	Let $T_\tau = \{\xv \in \Z_q^m : \norm{\xv}_2 \le \tau\}$. For $\sigma \ll q$, we have 
	$$\sum_{\xv \in T_\tau} |\chi_{q/\sigma}^m(\xv)|^2 = 1 - o(1),$$
	for $\tau = O(\frac{q\sqrt{m}}{\sigma})$.
\end{claim} 

\begin{proposition}
	Assume we have an efficient quantum algorithm for $\SLWE_{q,m,n,\chi_\sigma^m}$ with high probability, then we have an efficient quantum algorithm for $\ISIS(q,m,n,\beta = O(\frac{\sqrt{m}q}{\sigma}))$.
\end{proposition}
\begin{proof}
	This is an immediate corollary of \Cref{Theorem:1} and the above claims. \andre{TODO}
\end{proof}
}
\section{Reverse direction $\SLWE \rightarrow \ICLWE$}

\begin{definition}\label{Definition:Clean}
	We say that a unitary $U$ for $\ICLWE(\Am,f)$ has fidelity $\gamma$ if for all $\yv \in \ZZ_q^n$,
	$$
		U\ket{\yv}\ket{\zerov} = \ket{\yv}\ket{\wphi_\yv} \quad \text{with } \E_{\yv \Unif \Z_q^n}\left[\left|\braket{\wphi_\yv}{W_\yv}\right|\right] = \gamma,
	$$
	for some unit vectors $\ket{\wphi_\yv}$; where $\ket{W_\yv} = \frac{1}{\sqrt{w_\yv}}\sum_{\sv \in \Z_q^n} \omega^{-\yv \cdot \sv}\sum_{\ev \in \Z_q^m} f(\ev)\ket{\Am\T \sv + \ev}$.
\end{definition}

We prove the following theorem

\begin{theorem}\label{Theorem:2}
	Let $q,m,n$ be positive integers, $f$ be a function with domain $\Z_q^m$ and codomain $\mathbb{C}$, and $\Am \in \Z_q^{n \times m}$. If:
	\begin{itemize}
		\item We have an efficient unitary for $\ICLWE(\Am,f)$ with fidelity $\gamma = 1 - \eps'$.
		\item $\E_{\yv}\left[\sqrt{\frac{w_\yv}{q^n}}\right] = 1-\eps$, where $w_\yv$ is the normalization factor so that  $$\ket{W_\yv} = \frac{1}{\sqrt{w_\yv}}\sum_{\sv \in \Z_q^n} \omega^{- \sv \cdot \yv}\sum_{\ev \in \Z_q^m} f(\ev) \ket{\Am\T \sv + \ev} \quad \text{is a unit vector.}$$
	\end{itemize}
	Then one can construct an efficient quantum algorithm that solves $\SLWE(\Am,f)$ that succeeds with probability $\left(1 - \eps - \eps' - 2\sqrt{\eps\eps'}\right)^2$.
\end{theorem}

\begin{proof}
  We prove this theorem below in the case where $\varepsilon = \varepsilon' = 0$, and defer the proof for general $\varepsilon$ and $\varepsilon'$ to \Cref{Appendix:Reverse1}.

  Remember that $\ket{\widehat{\psi_\sv}} = \frac{1}{q^n}\sum_{\yv \in \Z_q^n} \omega^{\yv \cdot \sv} \sqrt{w_\yv}\ket{\widehat{W_\yv}}$ (\cref{Proposition:PsiToW}).
  First apply a $\QFT$ to $\ket{\psi_\sv}$, then apply the function $\xv \mapsto \Am \xv$ in superposition and store the result in a second register.
  As each $\ket{\widehat{W_\yv}}$ is a superposition of vectors in $\Lambda^\perp_\yv(\Am)$, this yields
  $$
    \frac{1}{q^n}\sum_{\yv \in \Z_q^n} \omega^{\yv \cdot \sv} \sqrt{w_\yv}\ket{\widehat{W_\yv}} \ket{\yv}
  $$

  Now, the existence of an efficient unitary for $\ICLWE(\Am, f)$ implies the existence of another efficient unitary $U$ mapping $\ket{\yv}\ket{\zerov}$ to $\ket{\yv}\ket{\widehat{W_\yv}}$.
  Swapping the registers and applying $U^\dagger$ then zeroizes the $\ket{\widehat{W_\yv}}$ register, allowing us to remove it and leaving only
  $$
    \ket{\phi} \eqdef \frac{1}{q^n}\sum_{\yv \in \Z_q^n} \omega^{\yv \cdot \sv} \sqrt{w_\yv} \ket{\yv}
  $$
  Observe finally that the overlap between $\ket{\phi}$ and $\frac{1}{q^n}\sum_{\yv \in \Z_q^n} \omega^{\yv \cdot \sv} \ket{\yv}$ is $\E_\yv \frac{\sqrt{w_\yv}}{q^n} = 1$.
  Thus, applying an inverse $\QFT$ on $\ket{\phi}$, and then measuring yields $\sv$ with probability $1$.
\end{proof}

\COMMENT{
\section{Conditional reverse directions}

As we argued, solving $\ICLWE(\Am,f)$ when $Im(\hf) \subseteq T$ is equivalent to constructing a superposition of solutions for $\ISIS(\Am,T)$ so in particular, we can solve $\ISIS(\Am,f)$. However, the converse is not necessarily true. If we have an algorithm that outputs one solution even at random then one cannot in all generality construct a superposition of solutions. We first present simple scenarios where we can circumvent this issue

\paragraph{The (almost) unique solution regime.} If we are in the setting where there is a unique solution to the $\ISIS(\Am,T)$ then finding this solution and constructing a uniform superposition of solutions is the same. We relax this by showing that if we choose a set $T$ such that the average number of solutions of $\ISIS(\Am,T)$ is polynomial in $m$ and well distributed w.r.t. the syndromes, we can show that if we have an efficient algorithm that outputs a random solution of $\ISIS(\Am,T)$ then one can construct an efficient quantum algorithm for $\ICLWE(\Am,f)$ with $\hf \sim \one_T$. 

\paragraph{Randomness-Recoverable Algorithms.}

\begin{definition}
	An algorithm $\aa$ for $\ISIS(\Am,T)$ is said to be randomness-recoverable iff. 
	\begin{itemize}
		\item $\aa$ can be described as a deterministic function $\aa(\yv,r)$ where $r \in \zo^l$ is the randomness and outputs an element $\xv$ such that $\Am \xv = \yv$.
		\item There is an efficient randomness recovery algorithm $\mathsf{RdRecov}$ satisfying $\mathsf{RdRecov}(\yv,\aa(\yv,r)) = r$ for each $r \in \zo^l$ and $\yv \in \Z_q^n$. 
	\end{itemize}
\end{definition}

\begin{proposition}
	Let $\Am \in \Z_q^m$ and $T \subseteq \F_q^n$. Assume we have an efficient randomness-recoverable algorithm $\aa$ for $\ISIS(\Am,T)$. Then we can construct an efficient quantum algorithm that solves $\ICLWE(\Am,f)$ with $\hf \sim \one_T$. This directly implies that we have a quantum algorithm that solves perfectly $\SLWE(\Am,f)$.
\end{proposition}

\begin{proof}
Fix $\yv \in \Z_q^n$. We describe algorithms $\aa(\yv,\cdot)$ and $\mathsf{RdRecov}(\yv,\cdot)$ as quantum unitaries $U_\aa$ and $U_{\mathsf{RdRecov}}$ (that depend on $\yv$) satisfying
\begin{align*}
	U_\aa \ket{r}\ket{0} = \ket{r}\ket{\aa(\yv,r)} \quad ; \quad U_{\mathsf{RdRecov}} \ket{\aa_{\yv,r}}\ket{0} = \ket{\aa_{\yv,r}}\ket{r}
\end{align*}
From these two unitaries, one can easily construct the unitary $U : \ket{r} = \ket{\aa_{\yv,r}}$. In order to conclude, we perform
$$ \frac{1}{\sqrt{2^l}} \sum_{r \in \zo^l} \ket{r} \xrightarrow{U} \frac{1}{\sqrt{2^l}} \sum_{r \in \zo^l} \ket{\aa_{\yv,r}} = \frac{1}{\sqrt{w_\yv}}\sum_{\xv \in \Lambda^\bot_\yv(\Am)} \hf(\xv) \ket{\xv}, $$
for a normalizing factor. If we define the state $\ket{W_\yv} = \frac{1}{w_\yv} \sum_{\sv \in \Z_q^n}\sum_{\ev \in \F_q^m} f(\ev) \ket{\Am\T \sv + \ev}$, then the above state is exactly $\ket{\widehat{W_\yv}}$ and we can construct $\ket{W_\yv}$ by applying an inverse Quantum Fourier Transform. We therefore constructed an algorithm that perfectly solves $\ICLWE(\Am,f)$.  This algorithm can be easily converted into a clean algorithm for $\ICLWE(\Am,f)$, which corresponds to a unitary 
$$ U' : \ket{\yv}\ket{0} \rightarrow \ket{\yv}\ket{W_\yv}.$$We use \Cref{Theorem:2} to conclude that we can construct an algorithm for $\SLWE(\Am,f)$.
\end{proof}}
\section{Conditional reverse reduction $\ICLWE \rightarrow \ISIS$}
In this last section, we provide another reduction.
We show that if we have an algorithm for $\ISIS(\Am,f)$ which has a specific form then it can be used to solve $\SLWE(\Am,f)$. We start by introducing the definition of a randomness-recoverable algorithm.
This is a random algorithm whose random tape value can be recovered given the corresponding output.
\begin{definition}[Randomness-Recoverable Algorithm for $\ISIS$]
  \label{def:rdrecov}
	Let $\aa$ be an algorithm for $\ISIS(\Am,T)$, and denote by $\aa(\yv; r)$ the output of $\aa$ on input $\yv$ using random tape $r \in \bin^{\ell}$.
  $\aa$ is said to be perfectly randomness-recoverable if there is an algorithm $\rdext$ satisfying
  $$
  \forall r \in \zo^\ell, \ \forall \yv \in \Z_q^n, \ \rdext(\yv,\aa(\yv; r)) = r.
  $$
\end{definition}

\begin{theorem}
  \label{th:iclwe-red-to-isis}
  Let $\aa$ be an algorithm for $\ISIS(\Am, T)$ with time complexity $t$.
  Assume the following properties:
  \begin{itemize}
    \item $\aa$ is perfectly randomness-recoverable;
    \item $\aa$ is $\eps$-close to being solution-uniform {\ie} if we define $p_\yv(\xv) = \Pr_{\rv \Unif \zo^\ell}\left(\aa(\yv,\rv) = \xv\right)$, we have 
    $$ \Delta(p_\yv,u_\yv) = \eps_\yv \quad \text{and} \quad \E_{\yv \Unif \Z_q^n}\left[\eps_\yv\right] = \eps,$$
    where $\Delta(p_\yv,u_\yv)$ is the statistical distance between the distribution $p_\yv$ and the probability function $u_\yv = \frac{1}{|T \cap \Lambda^\bot_{\yv}(\Am)|} \one_{T \cap \Lambda^\bot_{\yv}(\Am)}$.
  \end{itemize}
  Then there exists an efficient algorithm solving $\ICLWE(\Am, f)$ with $\hf = \one_T$ with fidelity $1-\eps$ and time complexity $\poly(t)$.
\end{theorem}

\begin{remark}
  We do not explicitly say anything about the success probability of the algorithm $\aa$. However, the fact that it is $\eps$-close to being solution-uniform implies that with a uniformly random choice of randomness $r$, the algorithm outputs a valid solution ({\ie} $\in T \cap \Lambda^\bot_\yv(\Am)$) with probability at least $1-\eps$ on average on $\yv$.
\end{remark}

\begin{proof}
  We show that there is an efficient process mapping $\ket{\yv}\ket{0}$ to $\ket{\yv}\ket{W_\yv}$ for all $\yv \in \Z^n_q$.
  Start with $\ket{y}$ as first register, then prepare a uniform superposition of $r \in \bin^\ell$ in the second register:
  $$
    \frac{1}{\sqrt{2^\ell}}\ket{\yv} \sum_{r \in \bin^\ell}\ket{r}
  $$
  Apply $\aa$ in superposition over the first and second registers and store the result in a third register:
  $$
    \frac{1}{\sqrt{2^\ell}}\ket{\yv} \sum_{r \in \bin^\ell}\ket{r} \ket{\aa(\yv; r)}
  $$
 From $\rdext$, we have access to the quantum unitary  $\Ugate_\rdext$ mapping $\ket{\yv}\ket{\aa(y;r)}\ket{0}$ to $\ket{\yv}\ket{\aa(y;r)}\ket{r}$ for any $r \in \zo^l$. We then perform the following operations.
  \begin{align*}
    \frac{1}{\sqrt{2^\ell}}\ket{\yv} \sum_{r \in \bin^\ell}\ket{r} \ket{\aa(\yv; r)}
    \xrightarrow{SWAP} & \frac{1}{\sqrt{2^\ell}} \ket{\yv} \sum_{r \in \bin^\ell}\ket{\aa(\yv; r)} \ket{r}\\
    \xrightarrow{\Ugate^\dagger_\rdext} & \frac{1}{\sqrt{2^\ell}} \ket{\yv} \sum_{r \in \bin^\ell}\ket{\aa(\yv; r)} \ket{0}\\
    \xrightarrow{discard} & \frac{1}{\sqrt{2^\ell}} \ket{\yv} \sum_{r \in \bin^\ell}\ket{\aa(\yv; r)} \\
    \xrightarrow{inverse \ \QFTw} & \frac{1}{\sqrt{2^\ell}} \ket{\yv} \otimes  \QFTw^{-1}_{\Z_q^n}\left(\sum_{r \in \bin^\ell}\ket{\aa(\yv; r)}\right)
  \end{align*}
  For this final step, one has to be careful because we do not necessarily restrict the outputs of $\aa(\yv;r)$ to elements of $\Z_q^n$. For example, we will consider algorithms which sometimes outputs Abort. We extend the operation $\QFTw^{-1}_{\Z_q^n}$ so that it applies the identity to elements outside of $\Z_q^n$. 
  
  Let $\ket{W'_\yv} = \frac{1}{\sqrt{2^{\ell}}} \QFTw^{-1}_{\Z_q^n}\left(\sum_{r \in \bin^\ell}\ket{\aa(\yv; r)}\right)$. In order to show that we have a quantum algorithm that solves $\ICLWE(\Am,f)$ with fidelity $1-\eps$, we need to compute the inner products $|\braket{W_\yv}{W'_\yv}| = |\braket{\widehat{W_\yv}}{\widehat{W'_\yv}}|$. Since we have $\hf \sim \one_T$, we know that 
  \begin{align*}
  \ket{\widehat{W_\yv}} & = \frac{1}{\sqrt{|\Lambda^\bot_\yv(\Am) \cap T|}}\sum_{\xv \in \Lambda^\bot_\yv(\Am) \cap T} \ket{\xv} \\
  \ket{\widehat{W'_\yv}} & = \frac{1}{\sqrt{2^{\ell}}} \sum_{r \in \bin^\ell}\ket{\aa(\yv; r)}
  \end{align*}
  
  We then obtain 
  \begin{align*}
  	|\braket{\widehat{W_\yv}}{\widehat{W'_\yv}}| = \sum_{\xv} \sqrt{p_\yv(\xv) u_\yv(\xv)} = F(p_\yv,u_\yv),
  \end{align*}

   Where $F$ is the fidelity between the two probability functions. By the Fuchs-van de Graaf inequality~\cite{FvdG99}, we have $F(p_{\yv},u_{\yv}) \ge 1 - \Delta(p_{\yv},u_{\yv}) = 1 - \eps_\yv$. From there, we can conclude that this algorithm solves $\ICLWE(\Am,\one_{T})$ with fidelity at least $\E_{\yv}\left[1 - \eps_\yv\right] = 1 - \eps$. 
\end{proof}

In our work, we only require the above theorem for algorithms which are perfectly randomness recoverable.
For sake of completeness, we extend the theorem to setting where the algorithm is almost randomness recoverable in in \Cref{Appendix:ImperfectRR}.

\section{Specific instantiation of the reverse reduction}
\label{sec:instantiations}

In this section, we describe a new algorithm to solve $\ISIS$ in a certain range of parameters, and show that, by instantiating our reduction with this solver, we can recover a recent result from \cite{BJK+25}, proving that $\SLWE$ is solvable in super-polynomial time when the modulo is a power of two.
Interestingly, while \cite{BJK+25} introduced a somehow complex quantum algorithm for this task, involving several layers of preparing, measuring, and combining quantum states, our method is arguably much more straightforward, and ``only'' involves $\QFT$ gates and coherent implementation of classical algorithms.
We believe that this hints that the hardness of $\SLWE$ mostly lies in the hardness of the underlying $\ISIS$ problem.

In this section, we sometimes abuse the notations and write, for $\xf \in \ZZ_q^m$, $\xf \in \ZZ_2^m$ to denote that the coefficients of $\xf$ all belong $\bin$.

\subsection{An algorithm for solving $\ISIS$ with non-trivial $\ell_\infty$-norm bounds}
We first recall one of the results of~\cite{BJK+25}.
\begin{proposition}
  For some parameters $n,m,q = 2^\ell$ with $q = \poly(n)$ and $m = 2^{O(\log(n)\log(q))}$, there exists an algorithm for $\SLWE(\Am,f)$ in the case $\Am$ is randomly chosen in $\Z_q^{n \times m}$ and $f$ is the function such that $\hf = \one_{\Z_2^m}$.
\end{proposition}
In this section, we describe an efficient $\ISIS$ solver algorithm, and show that we can recover the above proposition simply by plugging this algorithm in our reverse reduction.

We start by describing our solver, adapted from the $\SIS$-solver algorithm described in \cite[Appendix A]{CLZ22}, in a generic way.
Recall that instantiating our reduction with a solver requires the solver to be randomness-recoverable.
To prove this property, we need to provide more details on how to implement this algorithm --- in particular how we sample the vectors.
We simply present in this subsection a generic version of the solver, without the aforementioned detailed sampling procedures, and prove its correctness and time-complexity.
A complete version of the algorithm, with detailed sampling procedures, is presented them in \Cref{sec:random-recov-solver}.
\begin{algobox}{$\isissolver$}{isis-solver-mod2-simple}
  \begin{itemize}[leftmargin=0pt,itemsep=1em]
    \item[] \textbf{Parameters:}
    Positive integers $\ell$, $n$, and $w$ such that $w > cn$ for some constant $c > 1$.

    \item[] \textbf{Notations:}
    Let $q = 2^\ell$, and $m = w^\ell$.
    If $\ell > 1$, let $q' = q / 2$, and $m' = m / w$.

    \item[] \textbf{Inputs:} $\Am \in \Z^{n \times m}_q$, $\yv \in \Z^n_q$.

    \item[] \textbf{Output:} $\xf \in \Z^m_2$ such that $\Am \xf = \yv \mod q$, and $\xf \in \ZZ_2^m$.

    \item[] \textbf{Execution of the algorithm (for $\ell = 1$):}
    If $\Am$ is not full-rank, abort (\ie{} return error symbol $\bot$).
    Otherwise, sample a solution $\xv$ to the system $\Am \xv = \yv \mod 2$, and return it.

    \item[] \textbf{Execution of the algorithm (for $\ell > 1$):}
    \begin{enumerate}
      \item \textit{(Sample $\yv$'s shares.)}
      For $i \in \Iint{1}{m'-1}$, sample $\yv_i \Unif \Z_2^n$.
      Let $\yv_{m'} \eqdef \yv  - \sum_{i = 1}^{m' - 1} \yv_i \mod 2$.

      \item \textit{(Build block-matrices.)}
      Let $\Am_1, \dots, \Am_{m'}$ be the matrices in $\Z^{n \times w}_{2}$ such that $\Am = \left[\Am_1 \vert \dots \vert \Am_{m'}\right] \mod 2$.
      If there exists $i \in \Iint{1}{m}$ such that $\Am_i$ is not full rank modulo $2$, abort (\ie{} return error symbol $\bot$).

      \item\textit{(Sample solution pairs for each block.)}
      \label{it:sample-sols-mod2-simple}
      For all $i \in \Iint{1}{m'}$, compute two solutions $\xv^{(1)}_i$ and $\xv^{(2)}_i$ to the equation $\Am_i \xv = \yv_i \mod 2$, such that $\xv^{(1)}_i \neq \xv^{(2)}_i$.
      Then, let $\xv_i \eqdef \xv^{(1)}_i$, and $\zv_i \eqdef \xv^{(2)}_i - \xv^{(1)}_i \mod q$~\footnote{%
        We emphasize that the $\zv_i$'s are defined modulo $q$ and not modulo $2$.
        This turns out to be crucial to enforce the norm constraint.}.

      \item \textit{(Merge block solutions.)}
      Let $\xv \in \Z_2^m$, and $\Zv \in \Z^{m \times m'}_q$ defined as follows:
      \begin{align*}
        \xv \eqdef \left(\begin{array}{c}
          \xv_1\\
          \horzbar\\
          \vdots\\
          \horzbar\\
          \xv_{m'}
        \end{array}\right) & ~ &
        \Zv \eqdef \left[\begin{array}{cccc}
          \zv_1 & ~ & ~ & ~\\
          ~ & \zv_2 & ~ & ~\\
          ~ & ~ & \ddots & ~\\
          ~ & ~ & ~ & \zv_{m'}\\
        \end{array}\right]
      \end{align*} where the empty spots in $\Zm$ are zeroes.
      Note that $\Am \xv = \yv \mod 2$, and $\Am \Zm = \sum_i \Am_i \zv_i = \zerov \mod 2$.

      \item \textit{(Apply next step of the recursion.)}
      Let $\yv' \eqdef \frac{\yv - \Am \xv}{2} \mod q' \in \Z_{q'}^{n}$ and $\Am' \eqdef \frac{\Am\Zm}{2} \mod q' \in \Z_{q'}^{n \times m'}$~\footnote{%
        $\yv'$ and $\Am'$ are well defined, as $\Am \xv = \yv \mod 2$, and $\Am \Zv = 0 \mod 2$.}.
      Run $\isissolver(\Am',\yv')$ with parameters $(\ell-1, n)$, and let $\xv'$ denote the outcome.
      If $\xv = \bot$, abort (\ie{} return error symbol $\bot$).
      Otherwise, return $\xf = \xv + \Zm \xv' \mod q$.
    \end{enumerate}
  \end{itemize}
\end{algobox}

We now prove the correctness and time-complexity of \Cref{algo:isis-solver-mod2-simple}.
We start by proving that \Cref{algo:isis-solver-mod2-simple} only aborts with negligible probability over random matrix $\Am$ and its internal randomness.
\begin{lemma}
  \label{lem:no-abort-mod2}
  Let $n$, $\ell$, and $w$ be positive integers such that $w > cn$ for some constant $c > 1$, and let $q = 2^\ell$.
  For all $\yv \in \Z_q^n$,
  $$
    \probsublong{\Am \sample \Z_q^{n \times m}}{\isissolver(\Am, \yv) = \bot} = \negl(n)
  $$
\end{lemma}

\begin{proof}
  \Cref{algo:isis-solver-mod2-simple} aborts if any block-matrix is not full-rank.
  As $w$ is sufficiently larger than $n$, any block-matrix taken alone is full-rank with probability $1 - \negl(n)$.
  As one run of the algorithm uses polynomially-many block-matrices, showing that they are all uniformly random and independent is sufficient to prove the proposition by a union bound.

  Formally, we prove that when the algorithm is run at level $\ell$ with a uniformly random matrix $\Am \in \ZZ_q^{n \times m}$ as input , then the next-level's matrix $\Am' \in \ZZ_{q'}^{n \times m'}$ produced by the algorithm is also uniformly random, and independent from level-$\ell$'s block-matrices $(\Am_i \mod 2)_i$.

  Write first $\Am = \Am_{0} + 2 \Bm$, where $\Am_{0}$ is over $\Z_2$, $\Bm$ is over $\Z_{q'}$, and the addition and scalar-multiplication are over $\Z_q$.
  Note that, as $\Am$ is uniformly random, and $2$ divides $q$, then $\Am_{0}$ and $\Bm$ are independent and both uniformly random.
  For each block $i$, the corresponding block-matrix is then $(\Am_i \mod 2) = \Am_{0, i}$.
  The $i$-th column $\av'_i$ of the next-level matrix $\Am'$ is then defined as
  $$
    \av'_i  = \frac{\Am_i \zv_i}{2} \mod q'\\
    ~       = \frac{\Am_{0, i} \zv_i}{2} + \Bm_{i} \zv_i \mod q'
  $$
  Now recall that $\zv_i$ depends only on $\Am_{0, i}$.
  Hence $\Bm_{i}$ is uniformly random, and independent from $\Am_{0, i}$ and $\zv_i$.
  As $\zv_i$ is non-zero by construction, and each of its coefficients belong to $\{-1, 0, 1\}$, it comes that $\av'_i$ is uniformly random, and independent from $\Am_{0, i}$ --- and more generally independent from $\Am_0$.

  Applying this reasoning recursively completes the proof.
\end{proof}

We can now proceed with the correctness.
\begin{proposition}
  Let $n$, $\ell$, and $w$ be positive integers such that $w > cn$ for some constant $c > 1$, and let $q = 2^\ell$ and $m = w^\ell$.
  Then, for all $\yv \in \ZZ_q^n$:
  $$
    \probsublong{\begin{array}{l}
      \Am \sample \ZZ_q^{n \times m}\\
      \xv \gets \isissolver(\Am, \yv)
    \end{array}}{\begin{array}{c}
      \Am \xv = \yv \mod q\\
      \land\\
      \xv \in \ZZ_2^m
    \end{array}} = 1 - \negl(n)
  $$
\end{proposition}

\begin{proof}
  By \Cref{lem:no-abort-mod2}, we can safely assume that \Cref{algo:isis-solver-mod2-simple} does not abort, as it only changes the probability by a negligible amount.
  We prove the result by induction on $\ell$.
  When $\ell = 1$, the correctness is immediate.
  When $\ell > 1$, let $\xf$ be an output of the algorithm.
  By construction, $\xf = \xv + \Zv \xv' \mod q$ where
  $$\begin{array}{ll}
    \Am' \eqdef \Am\Zv/2 \mod q'            & \ie\ \Am\Zv = 2\Am' \mod q\\
    \Am' \xv' = \yv' \mod q'                & \ie\ 2 \Am' \xv' = 2 \yv' \mod q\\
    \yv' \eqdef (\yv - \Am \xv) / 2 \mod q' & \ie\ \Am \xv = \yv - 2\yv' \mod q
  \end{array}$$
  where the second line follows by induction hypothesis, as $\xv'$ is an output of \Cref{algo:isis-solver-mod2-simple} on input $(\Am', \yv')$.
  Using these identities, it comes $\Am \xf = \Am \xv + \Am \Zv \xv' \mod q = \Am \xv + 2 \Am' \xv' \mod q = \Am \xv + 2 \yv' \mod q = \yv \mod q$ which concludes the first part of the proof.

  It remains to show that $\xv \in \ZZ^m_2$.
  By induction hypothesis, $\xv' \in \ZZ_2^m$.
  Furthermore, the block-column structure of $\Zv$ enforces $\xf_i = \xv_i + \xv'[i] \cdot \zv_i$ for all $i$, where $\xv'[i] \in \bin$ denotes the $i$-th coefficient of $\xv'$~\footnotemark.
  \footnotetext{This is as opposed to the notation $\xv'_i$ that denotes the $i$-th \emph{block} of $\xv'$.}%
  Recall that $\xv_i$ is defined as the first solution $\xv_i^{(1)}$ of the pair $(\xv_i^{(1)}, \xv_i^{(2)})$ sampled in step \ref{it:sample-sols-mod2-simple}, and $\zv_i$ as $\xv_i^{(2)} - \xv_i^{(1)}$, where, crucially, the difference is over $\ZZ_q$.
  Thus, $\xf_i \in \set{\xv_i^{(1)}, \xv_i^{(2)}}$, which implies $\xf \in \ZZ^m_2$, and concludes the proof.
\end{proof}

\begin{proposition}
  \label{prop:time-complexity-mod2}
  Let $n$, $\ell$, and $w$ be positive integers such that $w > cn$ for some constant $c > 1$, and let $q = 2^\ell$.
  Then \Cref{algo:isis-solver-mod2-simple} runs in time $2^{O(\log(n)\log(q))}$.
\end{proposition}

\begin{proof}
  This is straightforward, as $m = w^\ell = 2^{O(\log(n)\log(q))}$, and the algorithm consists in less than $m$ blocks, each of them itself consisting in solving $2$ linear systems of size $\poly(n)$ over $\ZZ_2$, which can be done efficiently using Gaussian elimination.
\end{proof}

\subsection{Randomness recovery}
\label{sec:random-recov-solver}

We show in this subsection that \Cref{algo:isis-solver-mod2-simple} can be made randomness-recoverable.
In order to do that, we present a completed version of this algorithm, in which we explicitely describe how we sample the inhomogeneous and homogeneous solutions at each level.
We write the difference between this completed version and \Cref{algo:isis-solver-mod2-simple} in \textcolor{blue}{blue}.
This allows us to prove that the algorithm, we considering random tape as an input, is almost bijective, and allows to conclude that it is randomness-recoverable.
In the rest of the section, $\isissolver$ denotes this detailed version: \Cref{algo:isis-solver-mod2-detailed}.

\begin{algobox}{$\isissolver$}{isis-solver-mod2-detailed}
  \begin{itemize}[leftmargin=0pt,itemsep=1em]
    \item[] \textbf{Parameters:}
    Positive integers $\ell$, $n$, and $w$ such that $w > cn$ for some constant $c > 1$.

    \item[] \textbf{Subroutine:}
    A \emph{deterministic} $\matext$ procedure that takes a rank-$n$ matrix $\Am \in \ZZ_2^{n \times w}$ as input, as well as an integer $r \in \Iint{n}{w}$, and returns an ``extended'' matrix $\tAm \in \ZZ_2^{r \times w}$ of rank $r$, and such that the first $n$ rows of $\tAm$ exactly form the matrix $\Am$.

    \item[] \textbf{Notations:}
    Let $q = 2^\ell$, and $m = w^\ell$.
    If $\ell > 1$, let $q' = q / 2$, and $m' = m / w$.

    \item[] \textbf{Inputs:} $\Am \in \Z^{n \times m}_q$, $\yv \in \Z^n_q$.

    \item[] \textbf{Output:} $\xf \in \Z^m_2$ such that $\Am \xf = \yv \mod q$, and $\xf \in \ZZ_2^m$.

    \item[] \textbf{Execution of the algorithm (for $\ell = 1$):}
    If $\Am$ is not full-rank, abort (\ie{} return error symbol $\bot$ \textcolor{blue}{and the random tape}).
    \textcolor{blue}{Otherwise, extend $\Am$ to the rank-$w$ matrix $\tAm$ using $\matext$ procedure, then consume $w-n$ bits from the random tape to construct a vector $\uv \in \ZZ_2^{w - n}$, and finally compute and return the only solution $\xv$ to the system $\tAm \xv = (\yv \| \uv)\T \mod 2$.}

    \item[] \textbf{Execution of the algorithm (for $\ell > 1$):}
    \begin{enumerate}
      \item \textit{(Sample $\yv$'s shares.)}
      \textcolor{blue}{For $i \in \Iint{1}{m'-1}$, consume $n$ bits from the random tape to construct $\yv_i \Unif \Z_2^n$.}
      Let $\yv_{m'} \eqdef \yv  - \sum_{i = 1}^{m' - 1} \yv_i \mod 2$.

      \item \textit{(Build block-matrices.)}
      Let $\Am_1, \dots, \Am_{m'}$ be the matrices in $\Z^{n \times w}_{2}$ such that $\Am = \left[\Am_1 \vert \dots \vert \Am_{m'}\right] \mod 2$.
      If there exists $i \in \Iint{1}{m}$ such that $\Am_i$ is not full rank modulo $2$, abort (\ie{} return error symbol $\bot$ \textcolor{blue}{and the random tape}).

      \item\textit{(Sample solution pairs for each block.)}
      \label{step:sol-sample}
      \textcolor{blue}{For all $i \in \Iint{1}{m'}$, extend $\Am_i$ to a rank-$w-1$ matrix $\tAm_i$ using $\matext$ procedure, then consume $w-n-1$ bits from the random tape to construct a vector $\uv_i \in \ZZ_2^{w-n-1}$.
      Compute then the two solutions $\xv^{(1)}$ and $\xv^{(2)}$ to the equation $\tAm_i \xv = (\yv_i \| \uv_i)\T \mod 2$, such that $\xv^{(1)}_i \preccurlyeq \xv^{(2)}_i$ in lexicographical order.}
      Then, let $\xv_i \eqdef \xv^{(1)}_i$, and $\zv_i \eqdef \xv^{(2)}_i - \xv^{(1)}_i \mod q$.

      \item \textit{(Merge block solutions.)}
      \label{step:merge-sols}
      Let $\xv \in \Z_2^m$, and $\Zv \in \Z^{m \times m'}_q$ defined as follows:
      \begin{align*}
        \xv \eqdef \left(\begin{array}{c}
          \xv_1\\
          \horzbar\\
          \vdots\\
          \horzbar\\
          \xv_{m'}
        \end{array}\right) & ~ &
        \Zv \eqdef \left[\begin{array}{cccc}
          \zv_1 & ~ & ~ & ~\\
          ~ & \zv_2 & ~ & ~\\
          ~ & ~ & \ddots & ~\\
          ~ & ~ & ~ & \zv_{m'}
        \end{array}\right]
      \end{align*} where the empty spots in $\Zm$ are zeroes.
      Note that $\Am \xv = \yv \mod 2$, and $\Am \Zm = \sum_i \Am_i \zv_i = \zerov \mod 2$.

      \item \textit{(Apply next step of the recursion.)}
      Let $\yv' \eqdef \frac{\yv - \Am \xv}{2} \mod q' \in \Z_{q'}^{n}$ and $\Am' \eqdef \frac{\Am\Zm}{2} \mod q' \in \Z_{q'}^{n \times m'}$.
      Run $\isissolver(\Am',\yv')$ with parameters $(\ell-1, n)$, and let $\xv'$ denote the outcome.
      If $\xv' = \bot$, abort (\ie{} return error symbol $\bot$ \textcolor{blue}{and the random tape}).
      Otherwise, return $\xf = \xv + \Zm \xv' \mod q$.
    \end{enumerate}
  \end{itemize}
\end{algobox}

\begin{remark}
  We describe an example of an efficient deterministic matrix extension procedure in \Cref{app:matext} to be used as the $\matext$ subroutine.
\end{remark}

\begin{remark}
  Note in addition to the detailed sampling, another change in \Cref{algo:isis-solver-mod2-detailed} is that it now returns its random tape when aborting.
  An advantage with this formulation is that it ensures that the algorithm is perfectly randomness-recoverable, even when it aborts.
\end{remark}

\paragraph{Random tapes notations.}
We describe the notations we use in this section when discussing \Cref{algo:isis-solver-mod2-detailed}'s internal randomness.
First note that, at any level $\ell > 1$, \Cref{algo:isis-solver-mod2-simple} constructs $m' = w^{\ell-1}$ blocks
For each of them but the last one, $n$ bits of randomness are consumed to sample the shares $\yv_i$'s.
And for each of them, $w - n - 1$ bits of randomness are consumed to sample the solutions.
In total, $(m'-1)n + m'(w - n - 1) = m'w - m' - n$ are consumed at each level $\ell > 1$, and $n$ are consumed at level $1$.
We then describe the random tape associated to any level $\ell > 1$ a set of vectors $\left(\{\yv_i\}_{i \in \Iint{1}{m'-1}}, \{\uv_i\}_{i \in \Iint{1}{m'}}\right)$, where $\yv_i \in \ZZ_2^{n}$, and $\uv_i \in \ZZ_2^{w - n - 1}$, and the one associated to level $1$ as a vector $\uv \in \ZZ_2^{w - n}$.
We denote the whole random tape (for all levels) $\rdtapef$, and often write $\rdtapef = \rdtape \Vert \rdtape'$, where $\rdtape$ denotes the random tape for the current level, and $\rdtape'$ denotes the one for the next levels.
Finally, we write $\isissolver(\Am, \yv; \rdtapef)$ to denote the execution of \Cref{algo:isis-solver-mod2-detailed} with random tape $\rdtapef$.

\paragraph{Randomness-recoverability of \Cref{algo:isis-solver-mod2-detailed}.}
In the following, we state propositions to show that \Cref{algo:isis-solver-mod2-detailed} is solution-uniform, and randomness-recoverable.
We start by showing that any two different random tapes lead to two different solutions (\Cref{prop:injectivity-isis-solver-mod2}) and that the algorithm is full-support (\Cref{prop:full-support-mod2}) in the sense that it can output all possible solutions, which directly implies solution-uniformity.
Finally, we prove that our algorithm is randomness-recoverable (\Cref{prop:rdrecov-mod2}).

\begin{proposition}
  \label{prop:injectivity-isis-solver-mod2}
  Let $n$, $\ell$, and $w$ be positive integers such that $w > cn$ for some constant $c > 1$, and let $q = 2^\ell$ and $m = w^\ell$.
  Let $\Am \in \ZZ_q^{n \times m}$, and $\yv \in \ZZ_q^n$.
  For all pairs $(\rdtapef_1, \rdtapef_2)$ of random tapes for $\isissolver$, if $\rdtapef_1 \neq \rdtapef_2$, then $\isissolver(\Am, y; \rdtapef_1) \neq \isissolver(\Am, y; \rdtapef_2)$.
\end{proposition}

\begin{proof}
  First note that, if $\isissolver$ aborts on $\rdtapef_1$ (resp. $\rdtapef_2)$, then the proof is immediate, as $\isissolver$ then returns $(\bot, \rdtapef_1)$, (resp. $(\bot, \rdtapef_2)$).
  We then focus on the case where none of the random tapes make the algorithm abort, and proceed by induction on $\ell$.
  When $\ell = 1$, the matrix $\tAm$ is square and of full-rank, which directly yields a one-to-one mapping between random tapes and solutions, and concludes the proof at this level.

  When $\ell > 1$, let $\xf_{1} \gets \isissolver(\Am, \yv; \rdtapef_1)$, and $\xf_{2} \gets \isissolver(\Am, \yv; \rdtapef_2)$ for $\rdtapef_1 \neq \rdtapef_2$.
  We prove in the following that $\xf_1 \neq \xf_2$.
  Write $\rdtapef_1$ as $(\rdtape_1 \Vert \rdtape'_1)$ and $\rdtapef_2$ as $(\rdtape_2 \Vert \rdtape'_2)$, and distinguish two cases:
  \begin{itemize}
    \item Assume first that $\rdtape_1 = \rdtape_2$, and $\rdtape'_1 \neq \rdtape'_2$.
    Let $(\xv_1, \Zm_1)$ be the pair of inhomogeneous and homogeneous solutions produced by $\isissolver(\Am, \yv; \rdtapef_1)$ at level-$\ell$, and $\xv'_1 \eqdef \isissolver(\Am', \yv'; \rdtape'_1)$.
    Similarly, let $(\xv_2, \Zm_2)$ be the pair of inhomogeneous and homogeneous solutions produced by $\isissolver(\Am, \yv; \rdtapef_2)$ at level-$\ell$, and $\xv'_2 \eqdef \isissolver(\Am', \yv'; \rdtape'_2)$.
    Then, by induction hypothesis, we have $\xv'_1 - \xv'_2$, and by construction, we have $(\xv_1, \Zm_1) = (\xv_2, \Zm_2)$.
    Thus, $\xf_1 - \xf_2 = \Zm_1(\xv'_1 - \xv'_2)$.
    As the kernel of $\Zm_1$ is $\{\zerov\}$, $\xf_1 \neq \xf_2$.

    \item Assume now that $\rdtape_1 \neq \rdtape_2$, and let $i$ be an block index such that $\rdtape_{1, i} \neq \rdtape_{2, i}$.
    Then, by construction, we have $\xv_{1, i} \neq \xv_{2, i}$, and $\zv_{1, i} \neq \zv_{2, i}$.
    Recall that $\xf_{1, i} = \xv_{1, i} + \zv_{1, i} \xv'_1[i]$ is a solution to the equation $\tAm_i \xv = (\yv_{1, i} \Vert \uv_{1, i})\T \mod 2$, and $\xf_{2, i} = \xv_{2, i} + \zv_{2, i} \xv'_2[i]$ is a solution to the equation $\tAm_i \xv = (\yv_{2, i} \Vert \uv_{2, i})\T \mod 2$.
    As $(\yv_{1, i} \Vert \uv_{1, i}) \neq (\yv_{2, i} \Vert \uv_{2, i})$, it directly comes $\xf_1 \neq \xf_2$, which concludes the proof.
  \end{itemize}
\end{proof}

\begin{remark}
  Note that this proposition is the reason why we cannot sample each $\xv_i$ by extending the block-matrix $\Am_i$ to be a square matrix, hence fixing a single solution, and letting $\zv_i$ be the difference between the two solutions produced by fixing the first $w-n-1$ bits of randomness, and letting the last one free.
  Indeed, this would imply that two random tapes for a level $\ell$ that differ only on the last bit correspond to the same $\zv_i$, and the second case ``$\rdtape_1 \neq \rdtape_2$'' would not carry over.
\end{remark}

The following proposition states that our $\ISIS$ solver is full-support with probability almost $1$.
\begin{proposition}
  \label{prop:full-support-mod2}
  Let $n$, $\ell$, and $w$ be positive integers such that $w > cn$ for some constant $c > 1$, and let $q = 2^\ell$ and $m = w^\ell$.
  Let $\Am \in \ZZ_q^{n \times m}$, and $\yv \in \ZZ_q^{n}$.
  Then, for all $\xf \in \Z^m_q$ such that $\Am\xf = \yv \mod q$, and $\xf \in \ZZ_2^m$.
  $$
    \prob{\xf \in \supp{\isissolver(\Am, \yv)}} = 1 - \negl(n)
  $$
  where the probability is taken over $\Am \sample \ZZ_q^{n \times m}$.
\end{proposition}

\begin{proof}
  We proceed by induction on $\ell$.
  When $\ell = 1$, let $\xf \in \Z^m_2$ such that $\Am\xf = \yv \mod 2$.
  Let $\uv \in \ZZ_2^{w-n}$ such that $(\yv \Vert \uv)\T = \widetilde{\Am} \xf \mod 2$, where $\widetilde{\Am}$ is the matrix obtained by running the matrix extension $\matext$ on input $\Am$.
  Note that $\tAm$ is a square matrix, so $\xf$ is the only solution of this equation.
  Thus, running $\isissolver$ on $(\Am, \yv)$ with random tape $\uv$ either yields $\bot$ if $\Am$ is not full-rank, or yields $\xf$ otherwise.
  The former case only happens with negligible probability, which concludes the proof at this level.

  Consider now a general level $\ell > 1$.
  Let $\xf \in \Z^m_q$ such that $\Am\xf = \yv \mod q$ and $\xf \in \ZZ_2^m$.
  We want to show that $\xf$ can be output by our $\ISIS$ solver algorithm, that is that $\xf$ can be written $\xf = \xv + \Zm \xv'$ where $(\xv, \Zm)$ is a pair of inhomogeneous-homogeneous solutions produced by the algorithm at this level, and $\xv'$ is an output of the solver at the next level $\ell - 1$.

  For all $i \in \{1, \dots, m'\}$, let $(\yv_i \| \uv_i)\T \eqdef \tAm_i \xf_i \mod 2$, where $\tAm_i$ is the extended block-matrix produced by $\matext$.
  We construct $\xv$ and $\Zm$ as in algorithm $\isissolver$: for all $i$, let $\xv_i \eqdef \xv^{(1)}_i$, and $\zv_i \eqdef \xv^{(2)}_i - \xv^{(1)}_i$, where $(\xv^{(1)}_i, \xv^{(2)}_i)$ is the pair, sorted by lexicographical order, of solutions to the equation $\tAm_i \xf_i = (\yv_i \| \uv_i)\T \mod 2$ --- note that $\xf_i \in \{\xv_i^{(1)}, \xv_i^{(2)}\}$ --- and let $\Zm$ be the corresponding block-columns matrix as defined in algorithm $\isissolver$.
  We then construct $\xv'$ as follows: for all $i$,
  $$
    \xv'[i] \eqdef \left\{\begin{array}{ll}
      0 \text{ if } \xv_i^{(1)} = \xv_i\\
      1 \text{ otherwise}
    \end{array}\right.
  $$
  It directly comes $\xf = \Zm\xv' \mod q$.
  Hence, by setting $\Am' \eqdef \Am \Zm / 2 \mod q'$, and $\yv' \eqdef (\yv - \Am \xv) / 2$ as in algorithm $\isissolver$, we have $\Am' \xv' = \yv' \mod q'$.
  By induction hypothesis, this implies that, with probability $1 - \negl(n)$, there exists a random tape $\rdtape'$ such that $\xv' = \isissolver(\Am', \yv'; \rdtape)$.
  Thus, by setting $\rdtape \eqdef ((\yv_i \| \uv_i)\T)_i$, and $\rdtapef \eqdef \rdtape \| \rdtape'$, it comes $\xf = \isissolver(\Am, \yv; \rdtapef)$, which concludes the proof.
\end{proof}

We can now conclude that our solver is almost solution-uniform.
\begin{proposition}
  \label{prop:sol-uniform-mod2}
  Let $n$, $\ell$, and $w$ be positive integers such that $w > cn$ for some constant $c > 1$, and let $q = 2^\ell$ and $m = w^\ell$.
  Let $\yv \in \ZZ_q^n$.
  Then algorithm $\isissolver$ is $\negl(n)$-close to be solution-uniform.
\end{proposition}

\begin{proof}
  The proof immediately follows from \cref{prop:injectivity-isis-solver-mod2,prop:full-support-mod2}.
\end{proof}

\begin{proposition}
  \label{prop:rdrecov-mod2}
  Let $n$, $\ell$, and $w$ be positive integers such that $w > cn$ for some constant $c > 1$, and let $q = 2^\ell$ and $m = w^\ell$.
  Then, for all $\Am \in \ZZ_q^{n \times m}$, \Cref{algo:isis-solver-mod2-detailed} is perfectly randomness-recoverable (\Cref{def:rdrecov}).
\end{proposition}

\begin{proof}
  We show that, given $\Am$, $\yv$, and any outcome of \Cref{algo:isis-solver-mod2-detailed}, we can efficiently recover the random tape $\rdtapef$ such that running \Cref{algo:isis-solver-mod2-detailed} on input $(\Am, \yv)$ with random tape $\rdtapef$ returns this outcome.
  With distinguish the following cases:
  \begin{enumerate}
    \item If the outcome is $(\bot, \rdtapef)$, we simply return $\rdtapef$.
    \item Else, if $\ell = 1$, the extended matrix $\tAm \in \ZZ_2^{w \times w}$ produced by $\matext$ procedure is a full-rank square matrix, hence there is a one-to-one mapping between the random tape $\rdtape = \uv$ and $\xf$.
    In this case, we then simply return $\uv \eqdef \tAm \xf$ as the random tape.

    \item Else, if $\ell > 1$, the outcome is a solution $\xf \in \ZZ_2^{n \times m}$, and, by \Cref{prop:injectivity-isis-solver-mod2}, there is a single way to write $\xf$ as $\xv + \Zm \xv' \mod q$.
    We then recover the random tape $\rdtape$ for this level by letting $(\yv_i \Vert \uv_i)\T \eqdef \tAm_i \xf_i \mod 2$, where $\tAm_i$ is the extended matrix obtained when running $\matext$ on $\Am_i$, and then $\rdtape \eqdef \{(\yv)_{i \in \Iint{1}{m'-1}}, (\uv)_{i \in \Iint{1}{m'}}\}$.
    Then we follow steps \ref{step:sol-sample} and \ref{step:merge-sols} of \Cref{algo:isis-solver-mod2-detailed} to construct solution pairs $(\xv_i, \zv_i)$ for all $i \in \Iint{1}{m'}$, and then construct $\xv$ and $\Zm$.
    Finally, we construct $\xv' \in \ZZ_2^{m'}$ by setting its $i$-th coefficient to be $\xv'[i] = 0$ if $\xf_i = \xv_i$, and $\xv'[i] = 1$ otherwise, and run this randomness-recovering procedure on $(\Am', \yv', \xv')$ to get the random-tape $\rdtape'$ of the next levels.
  \end{enumerate}
  By construction, in all cases, running \Cref{algo:isis-solver-mod2-detailed} on the recovered random tape yield the expected outcome.
\end{proof}

\paragraph{Putting everything together.}
By \Cref{prop:rdrecov-mod2,prop:sol-uniform-mod2,prop:time-complexity-mod2}, \Cref{algo:isis-solver-mod2-detailed} is perfectly randomness-recoverable, $\negl(n)$-solution-uniform, and runs in time $\poly\left(\log(n)\log(q)\right)$.
Then, by \Cref{th:iclwe-red-to-isis,Theorem:2}, we recover \Cref{Proposition:BJK+25}.

\paragraph{Extending to more general moduli.}
The correctness of \Cref{algo:isis-solver-mod2-simple} naturally extends to general moduli.
However, our strategy to make it full-support and randomness recoverable fails when considering moduli that are not powers of two.
The reason is that the crux of our proof lies in the fact that at every step of the induction, the partial solution $\xf = \xv + \Zm \xv' \mod q$ is small, \ie{} belongs to $\Z_2^m$.
To be able to keep it small, we need to find for every block matrix, a vector $\zv_i \in \ZZ_q^m$ such that both (1) $\Am_i \zv_i = 0 \mod p$ and (2) the solutions of $\Am_i \xv = \yv_i \mod p$ form a line $\{\xv_i^{(1)} + \alpha \zv_i \mod q \setmid \alpha \in \ZZ_p\}$, where $\xv_i^{(1)}$ is a fixed solution of this system (e.g. the smallest one, lexicographically-wise).

This is easily doable when $p = 2$, as there are only two such solutions, hence they always form a line.
This is however not the case for a general modulo, even another power-of-prime where the prime is not $2$.
Thus, this strategy fails and it is still unclear to us how to generalize this algorithm with the full-support and randomness recoverability properties for such moduli.
Interestingly, as discussed in \cite[Section 4.3]{BJK+25}, the reason why their algorithm works only for power-of-two moduli is the same as ours.

\paragraph{Comparison with \cite{BJK+25}.}

The algorithm of \cite{BJK+25} starts with many independent states ($\EDCPb$ samples), then applies a Kuperberg-like approach, consisting of recursively applying pairing-and-sieving operations which involve partial measurements to gradually build a ``target state'': a state whose Fourier amplitude concentrates on the secret $s$.
This process is iterative, adaptive, and requires maintaining a large collection of quantum states.

In contrast, our algorithm directly manipulates a single structured quantum state. By applying a QFT, we obtain in one step a state whose support lies on an $\ISIS$-type constraint.
Coherently applying the $\ISIS$ solver allows us to recover the same type of target state.
Our approach directly exploits the algebraic structure of the problem, allowing us to construct this target state in a single global step, and eliminating the need for explicit iterative pairing, sieving, or intermediate measurements.

Another advantage of our approach is that the quantum part of our algorithm is reduced to standard operations (mostly QFT) while the combinatorial complexity is delegated to the $\ISIS$ solver.
This yields a cleaner and more modular decomposition and highlights the connection with the lattice structure.

Regarding quantum resources however, our approach should not be interpreted as better than the one of \cite{BJK+25}: the overall complexity (time and memory) of \cite{BJK+25} matches ours when using the best known $\ISIS$ solver in this regime.
In particular, the large memory usage is not eliminated, but rather captured within the $\ISIS$ solver.

\bibliography{paper}
\bibliographystyle{alpha}

\newpage

\begin{appendix}
\section{Tractability of $\SLWE$}\label{Appendix:Tractability}
We essentially reproduce (with elements in $\Z_q$ instead of $\F_q$), the argument of~\cite{CT24}. We consider the set of states $\{\ket{\widehat{\psi_\sv}}\}_{\sv \in \Z_q^n}$ with $\ket{\psi_\sv} = \sum_{\ev \in \Z_q^m} f(\ev) \ket{\Am\T \sv + \ev}$. The associated \emph{Pretty Good Measurement} is the POVM $\{M_\sv\}$ with 
$$ M_\sv = \rho^{-1/2} \kb{\widehat{\psi_\sv}}\rho^{-1/2}, \quad \text{with } \rho = \sum_{\sv} \kb{\widehat{\psi_\sv}}.$$
From \Cref{Proposition:PsiToW}, we have that 
$\ket{\psi_\sv} = \frac{1}{q^n} \sum_{\yv \in \Z_q^n} \omega^{\yv \cdot \sv}\sqrt{w_\yv} \ket{W_\yv},$
from which we can write 
\begin{align*}
	\kb{\psi_\sv} & = \frac{1}{q^{2n}} \sum_{\yv,\yv' \in \Z_q^n} \sqrt{w_\yv w_{\yv'}} \omega^{\sv \cdot(\yv - \yv')} \ketbra{W_{\yv}}{W_{\yv'}} 
\end{align*}
and 
\begin{align*}
	\rho & = \sum_{\sv} \kb{\psi_\sv} \\
	& = \frac{1}{q^{2n}} \sum_{\yv,\yv' \in \Z_q^n} \sqrt{w_\yv w_{\yv'}} \sum_{\sv} \omega^{\sv \cdot(\yv - \yv')}\ketbra{W_{\yv}}{W_{\yv'}} \\
	& = \frac{1}{q^n} \sum_{\yv \in \Z_q^n} w_\yv \kb{W_\yv}
\end{align*}
Since the $\ket{W_\yv}$ are pairwise orthogonal and of norm $1$, we have $\rho^{-1/2} = \sum_{\yv \in \Z_q^n}\sqrt{\frac{q^n}{w_\yv}} \kb{W_\yv}$. Now, we write 
\begin{align*}
	\rho^{-1/2} \cdot \ket{\widehat{\psi_\sv}} & =  \left(\sum_{\yv \in \Z_q^n}\sqrt{\frac{q^n}{w_\yv}} \kb{W_\yv}\right) \cdot \left(\frac{1}{q^n} \sum_{\yv \in \Z_q^n} \omega^{\yv \cdot \sv}\sqrt{w_\yv} \ket{W_\yv}\right) \\
	& = \frac{1}{\sqrt{q^n}} \sum_{\yv \in \Z_q^n} \omega^{\yv \cdot \sv}\ket{W_\yv} \eqdef \ket{Y_\sv}
\end{align*}
where $\ket{Y_\sv}$ is a pure unit vector. By definition, we then have that $M_\sv = \kb{Y_\sv}$. The probability that the Pretty Good Measurement succeeds is then \begin{align*}
	p_{PGM} = E_{\sv} |\braket{\psi_\sv}{Y_\sv}|^2 = \frac{1}{q^{3n}} \left(\sum_{\yv \in \Z_q^n} \sqrt{w_\yv}\right)^2 = \left(\E_{\yv \in \Z_q^n} \left[\sqrt{\frac{w_\yv}{q^n}}\right]\right)^2
\end{align*}

Finally, we know that this measurement is optimal for distinguishing between the states $\ket{\psi_\sv}$ due to the symmetric nature of this set of states \cite{BKMH97}.

\section{Instantiation of our forward theorem with Gaussian distributions}\label{Appendix:Gaussians}
We directly define the discrete Gaussian probability function.
\begin{definition}
	Let $r > 0$. We define the Gaussian probability function of parameter $\sigma$ as follows
	$$ \rho_r(x) = \frac{e^{-\frac{\pi x^2}{r^2}}}{\sum_{y \in \Z_q} e^{-\frac{\pi y^2}{r^2}}}.$$
\end{definition}

The discrete Gaussian distribution has 

\begin{proposition}
	Let $\sigma > 0$ and let $f = \sqrt{\rho_r(x)}$. We have $\hf =  \sqrt{\rho_{r^*}} + o(1)$, with $r^* = \frac{q}{2r}$. Moreover, for $T = \{\xv \in \Z_q^m : \norm{\xv}_2 \le \sqrt{m} \frac{q}{\sqrt{8\pi}r}\}$, we have $\sum_{\yv \in T} |\hf(\yv)|^2 = \Theta(1)$.
\end{proposition}

\begin{proof}
	First notice that $f$ is proportional to $\rho_{\sqrt{2}r}$. The Poisson formula in the continuous setting tells us that the Fourier transform of a Gaussian of parameter $r$ is a Gaussian of parameter $\frac{q}{r}$. This also holds approximately for the discrete Gaussian (see for instance~\cite{DFS24}, Lemma 4) which tells us $\hf = \rho_{\frac{q}{\sqrt{2}r}} + o(1) = \sqrt{\rho_{\frac{q}{2r}}} + o(1)$. For the second part, we just use that a vector in $\Z_q^m$ where each coordinate is chosen according to a Gaussian distribution of parameter $\frac{q}{2r}$ has typical $2$-norm $\sqrt{m} \frac{q}{\sqrt{8\pi}r}$.
\end{proof}

As a direct corollary, using this in Theorem~\ref{Theorem:1}, we obtain the following statement 

\begin{proposition}
	Let $r > 0$. If we have an efficient algorithm for $\SLWE(\Am,\sqrt{\rho_r})$, or if we have an efficient algorithm for $\LWE(\Am,\rho_r)$, then we have an efficient quantum algorithm for $\ISIS_2(\Am,\sqrt{\frac{qm}{8\pi r}})$. 
\end{proposition}

\section{Proof of reverse reduction with almost randomness recoverable algorithms for $\ISIS$}
\label{Appendix:ImperfectRR}
In this section, we extend Theorem~\ref{Theorem:3} to the theorem below
\begin{theorem}
	Let $\aa$ be an algorithm for $\ISIS(\Am, T)$ with time complexity $t$.
	Assume the following properties:
	\begin{itemize}
		\item $\aa$ is $\eps$-close to being solution-uniform {\ie} if we define $p_\yv(\xv) = \Pr_{\rv \Unif \zo^\ell}\left(\aa(\yv,\rv) = \xv\right)$, we have 
		$$ \Delta(p_\yv,u_\yv) = \eps_\yv \quad \text{and} \quad \E_{\yv \Unif \Z_q^n}\left[\eps_\yv\right] = \eps,$$
		where $\Delta(p_\yv,u_\yv)$ is the statistical distance between the distribution $p_\yv$ and the probability function $u_\yv = \frac{1}{|T \cap \Lambda^\bot_{\yv}(\Am)|} \one_{T \cap \Lambda^\bot_{\yv}(\Am)}$.
		\item $\aa$ is $\varepsilon'$-randomness-recoverable; let $\rdext$ denote a randomness-extractor for $\aa$ succeeding with probability $1 - \varepsilon'$.
		That is, for all $\yv \in \Z_q^n$,
		$$
		\probsublong{r \sample \bin^\ell}{\rdext(\yv,\aa(\yv; r)) = r} = 1 - \varepsilon'.
		$$
	\end{itemize}
	Then there exists an efficient algorithm solving $\ICLWE(\Am, f)$ with $\hf = \one_T$ with fidelity $1-\eps- \varepsilon'$ and time complexity $\poly(t)$.
\end{theorem}

\begin{remark}
	We do not explicitly say anything about the success probability of the algorithm $\aa$. However, the fact that it is $\eps$-close to being solution-uniform implies that with a uniformly random choice of randomness $r$, the algorithm outputs a valid solution ({\ie} $\in T \cap \Lambda^\bot_\yv(\Am)$) with probability at least $1-\eps$ on average on $\yv$.
\end{remark}

\begin{proof}
	We show that there is an efficient process mapping $\ket{\yv}\ket{0}$ to $\ket{\yv}\ket{W_\yv}$ for all $\yv \in \Z^n_q$.
	Start with $\ket{\yv}$ as first register, then prepare a uniform superposition of $r \in \bin^\ell$ in the second register:
	$$
	\frac{1}{\sqrt{2^\ell}}\ket{\yv} \sum_{r \in \bin^\ell}\ket{r}
	$$
	Apply then $\aa$ coherently over the first and second registers and store the result in a third register, then swap the second and third registers.
	The resulting state is
	$$
	\frac{1}{\sqrt{2^\ell}}\ket{\yv} \sum_{r \in \bin^\ell}\ket{\aa(\yv; r)}\ket{r}
	$$
	Apply now the randomness-extractor $\rdext$ coherently over the first and second registers, and substract the result from the second one; let $\ket{\psi_\yv}$ denote the resulting state:
	$$
	\ket{\psi_\yv} = \frac{1}{\sqrt{2^\ell}}\ket{\yv} \sum_{r \in \bin^\ell} \ket{\aa(\yv; r)} \ket{r - \rdext(\yv, \aa(\yv; r))}
	$$
	
	Consider now a state $\ket{\psi'_\yv}$ constructed in the same way with an ``ideal'' $\ISIS$-solver $\aa'$, that is a perfectly solution-uniform and perfectly randomness-recoverable --- with corresponding ideal randomness-extractor $\rdext'$ --- solver:
	$$
	\ket{\psi'_\yv} = \frac{1}{\sqrt{2^\ell}}\ket{\yv} \sum_{r \in \bin^\ell} \ket{\aa'(\yv; r)} \ket{0}
	$$
	where the last register is $\ket{0}$ because the randomness-extractor is perfect.
	We now discard this last register, and apply an inverse $\QFT$ to the second register.
	By \Cref{Proposition:WFourier}, this yields $\ket{\yv}\ket{W_\yv}$.
	
	It remains to show that, averaged over $\yv$, the overlap between $\ket{\psi_\yv}$ and $\ket{\psi'_\yv}$ is larger than $1 - \varepsilon - \varepsilon'$.
	We have
	\begin{align*}
		\left|\braket{\psi_\yv}{\psi'_\yv}\right|  &= \frac{1}{2^\ell}\ket{\yv} \sum_{r, r' \in \bin^\ell} \braket{\aa(\yv; r)}{\aa'(\yv; r')} \braket{r - \rdext(\yv, \aa(\yv; r))}{0}\\
		~                             &= \frac{1}{2^\ell}\ket{\yv} \sum_{r \in \bin^\ell} \left(\sum_{r' \in \bin^\ell}\braket{\aa(\yv; r)}{\aa'(\yv; r')}\right) \braket{r - \rdext(\yv, \aa(\yv; r))}{0}\\
	\end{align*}
	Now, because $\aa'$ is perfectly solution-uniform and perfectly randomness-recoverable, there is, for every $r$, at most a single $r'$ such that $\aa(\yv; r) = \aa'(\yv; r')$.
	Hence the first overlap $\sum_{r' \in \bin^\ell}\braket{\aa(\yv; r)}{\aa'(\yv; r')}$ is $1$ if $\aa(\yv; r) \in \Lambda^\perp_\yv \cap T$, and $0$ otherwise.
	Furthermore, the second overlap $\braket{r - \rdext(\yv, \aa(\yv; r))}{0}$ is $1$ if $r = \rdext(\yv, \aa(\yv; r))$, and $0$ otherwise.
	It comes
	\begin{align*}
		\left|\braket{\psi_\yv}{\psi'_\yv}\right| &= \probsub{r \sample \bin^\ell}{\aa(\yv; r) \in \Lambda^\perp_\yv \land \rdext(\aa(\yv; r)) = r}\\
		&= 1 - \probsub{r \sample \bin^\ell}{\aa(\yv; r) \not\in \Lambda^\perp_\yv \lor \rdext(\aa(\yv; r)) \neq r}
	\end{align*}
	
	By definition of statistical distance, we have $\probsub{r \sample \bin^\ell}{\aa(\yv; r) \not\in \Lambda^\perp_\yv} \leq \epsilon$, and $\probsub{r \sample \bin^\ell}{\rdext(\aa(\yv; r)) \neq r} = \epsilon'$ by definition of an $\epsilon'$-randomness-recoverable solver.
	Applying union bound thus gives us $\left|\braket{\psi_\yv}{\psi'_\yv}\right| \geq 1 - \varepsilon - \varepsilon'$ which concludes the proof.
\end{proof}

\section{Proof of Theorem~\ref{Theorem:2}}\label{Appendix:Reverse1}
	
This section will be devoted to the proof of Theorem~\ref{Theorem:2}. We first prove the following lemma.

\begin{lemma}\label{Lemma:Addy}
	The unitary $U$ that for each $\yv \in \Z_q^n$ satisfies $U : \ket{W_\yv}\ket{0} \rightarrow \ket{W_\yv}\ket{\yv}$ is efficiently computable. 
\end{lemma}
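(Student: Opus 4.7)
The plan is to implement $U$ in three standard steps, leveraging Proposition~\ref{Proposition:WFourier}, which tells us that
\[
\ket{\widehat{W_\yv}} = \frac{q^n}{\sqrt{w_\yv}} \sum_{\xv \in \Lambda^\bot_\yv(\Am)} \hf(\xv)\ket{\xv},
\]
and in particular that $\ket{\widehat{W_\yv}}$ is supported entirely on the coset $\Lambda^\bot_\yv(\Am) = \{\xv \in \Z^m : \Am \xv = \yv \mod q\}$. The key observation is that this coset is labelled precisely by $\yv$, so the coset representative (namely $\yv$) can be read out coherently in the Fourier basis without disturbing the first register.

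First, I would apply $QFT_{\Z_q^m}$ on the first register, transforming $\ket{W_\yv}\ket{\zerov}$ into $\ket{\widehat{W_\yv}}\ket{\zerov}$, which by the observation above is a superposition only over $\xv$ with $\Am \xv = \yv \mod q$. Second, I would use the efficiently computable classical map $\xv \mapsto \Am\xv \mod q$ to coherently add $\Am \xv \mod q$ into the second register. Because every $\xv$ in the support of $\ket{\widehat{W_\yv}}$ satisfies $\Am\xv = \yv \mod q$, the second register becomes exactly $\ket{\yv}$ on the whole superposition, with no entanglement introduced between the two registers. Third, I would apply $QFT^\dagger_{\Z_q^m}$ on the first register to return to $\ket{W_\yv}\ket{\yv}$.

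All three operations are efficient: the quantum Fourier transforms over $\Z_q^m$ run in $\poly(m,\log q)$, and the matrix–vector product $\Am \xv \mod q$ is a simple reversible arithmetic circuit of size $\poly(n,m,\log q)$. Concatenating them yields a unitary (it is defined on an orthonormal family $\{\ket{W_\yv}\}_\yv$ by Proposition~\ref{Proposition:WFourier}, which also shows the $\ket{W_\yv}$ are pairwise orthogonal), and by construction it acts as $\ket{W_\yv}\ket{\zerov} \mapsto \ket{W_\yv}\ket{\yv}$ on every basis state of interest.

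There is no real obstacle here: the construction is essentially the coset-reading trick, and the only thing to check carefully is that writing $\Am\xv \mod q$ into the second register does not perturb the superposition, which follows from the fact that $\Am\xv$ is a \emph{constant} (equal to $\yv$) on the support of $\ket{\widehat{W_\yv}}$. Any residual care would go into specifying that QFT is taken over $\Z_q^m$ throughout and that the addition into the second register is the standard reversible adder modulo $q$.
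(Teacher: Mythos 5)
Your proof is correct and follows exactly the same route as the paper: conjugate by $QFT_{\Z_q^m}$ on the first register and, in the Fourier basis where $\ket{\widehat{W_\yv}}$ is supported on $\Lambda^\bot_\yv(\Am)$, coherently compute $\Am\xv = \yv$ into the second register. The paper's proof is precisely this three-step composition, so there is nothing to add.
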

\begin{proof}
	Let $V_\Am$ be the quantum unitary satisfying $V_{\Am}\ket{\xv}\ket{0} = \ket{\xv}\ket{\Am\xv}$.	We perform the following operations, using the expressions of $\ket{\widehat{W_\yv}}$ from \Cref{Proposition:WFourier}:
	\begin{align*}
		\ket{W_\yv}\ket{0} \xrightarrow{QFT_{\Z_q^m} \otimes I} \frac{q^n}{\sqrt{w_\yv}} \sum_{\xv \in \Lambda^\bot_\yv(\Am)} \hf(\xv)\ket{\xv}\ket{0} 
		&\xrightarrow{V_\Am} \frac{q^n}{\sqrt{w_\yv}} \sum_{\xv \in \Lambda^\bot_\yv(\Am)} \hf(\xv)\ket{\xv}\ket{\yv} \\
		&\xrightarrow{QFT^\dagger_{\Z_q^m} \otimes I} \ket{W_\yv}\ket{\yv}.
	\end{align*}
\end{proof}
We can now prove our  theorem 
\begin{proof}
	We start from an efficient unitary $U$ for $\ICLWE(\Am,f)$.
	$$
	\forall \yv \in \Z_q^n, \ U\ket{\yv}\ket{\zerov} = \ket{\yv}\ket{\wphi_\yv} \quad \text{with } \E_{\yv \Unif \Z_q^n}\left[\left|\braket{\wphi_\yv}{W_\yv}\right|\right] = \gamma = 1 - \eps'.
	$$
	For each $\sv \in \Z_q^n$, we define 
	$$ \ket{B_\sv} = \frac{1}{\sqrt{q^n}}\sum_{\yv \in \Z_q^n} \omega^{\sv \cdot \yv} \ket{\wphi_\yv}\ket{\yv}.$$
	The $\ket{B_\sv}$ are pairwise orthogonal unit vectors. Using $U$, one can efficiently construct the unitary $\ket{\sv}\ket{0} \rightarrow \ket{B_\sv}$. Indeed, we can write
	\begin{align*}
		\ket{\sv}\ket{0} \xrightarrow{QFT \otimes I} \frac{1}{\sqrt{q^n}}\sum_{\yv \in \Z_q^n} \omega^{\yv \cdot \sv}\ket{\yv}\ket{0} \xrightarrow{U}  \frac{1}{\sqrt{q^n}}\sum_{\yv \in \Z_q^n} \omega^{\yv \cdot \sv}\ket{\yv}\ket{\wphi_\yv} \xrightarrow{SWAP} \ket{B_\sv}
	\end{align*}
	In particular, we can efficiently measure in the basis $\ket{B_\sv}$. We now present our algorithm for $\SLWE(\Am,f)$.
	\begin{enumerate}
		\item Start from $\ket{\psi_\sv} = \frac{1}{q^n} \sum_{\yv \in \Z_q^n} \omega^{\yv \cdot \sv}\sqrt{w_\yv} \ket{W_\yv}$ (see \Cref{Proposition:PsiToW}), and  apply the unitary from \Cref{Lemma:Addy} to obtain the state 
		$$ \ket{\psi'_\sv} = \frac{1}{q^n} \sum_{\yv \in \Z_q^n} \omega^{\yv \cdot \sv} \sqrt{w_\yv}\ket{W_\yv}\ket{\yv}.$$ 
		\item Measure this state in the basis $\{\ket{B_\sv}\}$ and output the result.
	\end{enumerate}
	The success probability of this algorithm is $\E_{\sv \Unif \Z_q^n} |\braket{\psi'_\sv}{B_\sv}|^2$. We compute 
	\begin{align*}
		|\braket{\psi'_\sv}{B_\sv}| = \frac{1}{q^n}\frac{1}{\sqrt{q^n}}\sum_{\yv \in \Z_q^n} \sqrt{w_\yv}\gamma_\yv = \E_{\yv}\left[\sqrt{\frac{w_\yv}{q^n}}\gamma_\yv\right].
	\end{align*}
	where $\gamma_\yv = |\braket{W_\yv}{\wphi_\yv}|$.
	In order to prove this statement, we define $\eps_y = 1 - \sqrt{\frac{w_\yv}{q^n}}$ and $\eps'_\yv = 1 - \gamma_\yv$. Let us now recap what we know about these variables.
	\begin{itemize}
		\item $\E_{\yv}\left[\sqrt{\frac{w_\yv}{q^n}}\right] = (1-\eps)$ hence $\E_{\yv}\left[\eps_\yv\right] = \eps$.
		\item $\E_{\yv}\left[{\frac{w_\yv}{q^n}}\right] = 1$ which can be rewritten $\E_{\yv}\left[(1 - \eps_\yv)^2\right] = 1$ which implies $\E_{\yv}[\eps_\yv^2] = 2\E_{\yv}[\eps_\yv] =2\eps$.
		\item $\E_{\yv}\left[\gamma_\yv\right] = \gamma$ so $\E_{\yv}\left[\eps'_\yv\right] = \eps'$
		\item Each $\gamma_\yv \in [0,1]$ so each $\eps'_\yv \in [0,1]$ and $\E_\yv\left[\eps'^2_\yv\right] \le \E_\yv\left[\eps'_\yv\right] = \eps'$.
	\end{itemize}
	We can now compute the success probability of this algorithm. We write 
	\begin{align*}
		|\braket{\psi'_\sv}{B_\sv}| & = \E_{\yv}\left[\sqrt{\frac{w_\yv}{q^n}}\gamma_\yv\right] = \E_\yv\left[(1-\eps_\yv)(1-\eps'_\yv)\right] = 1 - \eps - \eps' + \E_{\yv}\left[\eps_\yv \eps'_\yv\right]
	\end{align*}
	Now, using the Cauchy-Schwarz inequality, we obtain
	\begin{align*}
		\left|\E_{\yv}\left[\eps_\yv \eps'_\yv\right]\right| \le \sqrt{\E_\yv\left[\eps_\yv^2\right]} \sqrt{\E_\yv\left[\eps'^2_\yv\right]} \le \sqrt{2\eps\eps'}
	\end{align*}
	Plugging this in the above, we obtain that for each $\sv$, $|\braket{\psi'_\sv}{B_\sv}| \ge 1 - \eps - \eps' - 2\sqrt{\eps\eps'}$. Since the success probability is $\E_{\sv}\left[|\braket{\psi'_\sv}{B_\sv}|^2\right]$, we get the desired result. 
\end{proof}

\section{Matrice Extension Subroutine}
\label{app:matext}
We describe in this section an example of a deterministic matrix extension procedure to be used as the $\matext$ procedure in \Cref{algo:isis-solver-mod2-detailed}.
It takes a rank-$n$ matrix $\Am \in \ZZ_2^{n \times w}$ as input, as well as an integer $r \in \Iint{n}{w}$, and returns an ``extended'' matrix $\tAm \in \ZZ_2^{r \times w}$ of rank $r$, and such that the first $n$ rows of $\tAm$ exactly form the matrix $\Am$.

\begin{algorithm*}
  \caption{Extend a full rank matrix in $\mathbb{Z}_2^{n \times w}$ to a rank-$r$ matrix in $\Z_2^{r \times w }$ deterministically.}
  \label{proc:matrix-ext1}
  \textbf{Input:} A matrix $\Am \in \mathbb{Z}_2^{n \times w}$ of rank $n$, and an integer $r \in \Iint{n}{w}$.\\
  \textbf{Output:} A matrix $\widetilde{\Am} \in \mathbb{Z}_2^{r \times w}$ of rank $r$, and whose first $n$ rows are $\Am$.
  \begin{algorithmic}[1]
    \State Let $\ev_i^\top$ be the row vector in $\mathbb{Z}_2^{w}$ with a $1$ at position $i$ and $0$ everywhere else.
    \State $\widetilde{\Am} \gets \Am$
    \State $i \gets 0$
    \While{$\operatorname{rank}(\widetilde{\Am}) < r$}
    \State $\widetilde{\Am}' =
    \left[
    \begin{array}{c}
      \widetilde{\Am} \\
      \hline  \vspace{-0.33cm} \\
      \ev_i^\top
    \end{array}
    \right]$ \Comment{Append row $\ev_i\T$ to $\widetilde{\Am}$}
    \If{$\operatorname{rank}(\widetilde{\Am}') > \operatorname{rank}(\widetilde{\Am})$}
    \State $\widetilde{\Am} \gets \widetilde{\Am}'$
    \EndIf
    \State $i \gets i + 1$
    \EndWhile
    \State \Return $\widetilde{\Am}$
  \end{algorithmic}
\end{algorithm*}
 \end{appendix}

\end{document}